\newcommand*{\StrikeThruDistance}{0.15cm}%
\tikzset{strike thru arrow/.style={
    decoration={markings, mark=at position 0.5 with {
        \draw [thick,-] 
            ++ (-\StrikeThruDistance,-\StrikeThruDistance) 
            -- ( \StrikeThruDistance, \StrikeThruDistance);}
    },
    postaction={decorate},
}}
\title{Two-dimensional Kripke Semantics I: Presheaves}
\author{G. A. Kavvos}
\date{February 2024}
\begin{document}

\maketitle

\begin{abstract}
  The study of modal logic has witnessed tremendous development following the
  introduction of Kripke semantics. However, recent developments in programming
  languages and type theory have led to a second way of studying modalities,
  namely through their categorical semantics. We show how the two correspond. 
\end{abstract}

\section{Introduction}
 

The development of modal logic has undergone many phases
\cite{chagrov_1996,blackburn_2001,goldblatt_2006,van_benthem_2010}. It is
widely accepted that one of the most important developments was the relational
semantics of Kripke \cite{kripke_1963,kripke_1963b,kripke_1965} \cite[\S
1]{blackburn_2001} \cite[\S 4.8]{goldblatt_2006}. Kripke semantics has proven
time and again that it is intuitive and technically malleable, thereby exerting
sustained influence over Computer Science.

However, over the last 30 years another way of studying modalities has evolved:
looking at modal logic through the prism of the \emph{Curry-Howard-Lambek
correspondence} \cite{lambek_1988,sorensen_2006,wadler_2015} yields new
computational intuitions, often with surprising applications in both programming
languages and formal proof. The tools of the trade here are type theory and
category theory.

Up to now these two ways of looking at modalities have been discussed in
isolation. The purpose of this paper is to establish a connection: I will show
that the Kripke and categorical semantics of modal logic are part of a
\emph{duality}. It is well-known that dualities between Kripke and algebraic
semantics exist: the \emph{J\'{o}nsson-Tarski duality} is one of the
cornerstones of classical modal logic \cite[\S 5]{blackburn_2001}. The main
contribution of this paper is to show that such dualities can be elevated to the
level of \emph{proofs}. The punchline is that a \emph{profunctor} $R : \Op{\CC}
\times \CC \to \SET$, considered as a \emph{proof-relevant relation} on a
category $\CC$, uniquely corresponds to a categorical model of modal logic on
the category of presheaves on $\CC$.

There are two obstacles to overcome to get to that result. The first is that we
must work over an \emph{intuitionistic} substrate: most research on types and
categories is forced to do so, for unavoidable reasons \cite[\S 8]{lambek_1988}.
We must therefore first develop a duality for \emph{intuitionistic modal logic}.
However, there is no consensus on what intuitionistic modal logic is! The
problem is particularly acute in the presence of $\lozenge$ \cite{das_2023}. I
will avoid this problem by making canonical choices at each step. First, I will
formulate a Kripke semantics based on \emph{bimodules}, i.e. relations that are
canonically compatible with a poset. Then, I will show how \emph{Kan extension}
uniquely determines two adjoint modalities, $\blacklozenge$ and $\Box$, from any
bimodule. The fact these arise automatically is evidence that they are the
canonical choice of intuitionistic modalities.

The second obstacle stems from considering proofs. The jump from algebraic to
categorical semantics involves adding an extra `dimension' of proofs.
Consequently, in order to re-establish a duality, an additional dimension must
be added to Kripke semantics as well. I call the result a \emph{two-dimensional
Kripke semantics}. Category theorists will find it anticlimactic: it amounts to
the folklore observation that a proof-relevant Kripke semantics is essentially a
semantics in a presheaf category. 

Indeed, a sizeable proportion of this paper consists of folklore results that
are well-known to experts. However, many of them are drawn from related but
distinct areas: logic, order theory, category theory, and topos theory. As a
result, it does not appear that all of them are well-known by a \emph{single}
expert. Thus, the synthesis presented here appears to be new.

The results I present in this paper show that there are deep connections between
modal logic and presheaf categories. This is important, as the latter are
ubiquitous in logic and related fields: presheaf models are used in fields as
disparate as categorical homotopy theory \cite{riehl_2014,cisinski_2019}, type
theory \cite{hofmann_1997}, concurrency
\cite{joyal_1996,cattani_1997,cattani_2005}, memory allocation
\cite{oles_1985,oles_1997}, synthetic guarded domain theory
\cite{birkedal_2012}, second-order syntax and algebraic theories
\cite{fiore_1999,hamana_2004,fiore_2008,fiore_2010,fiore_2010b}, higher-order
abstract syntax \cite{hofmann_1999}, and so on. As a result, the connections
presented here may enable synthetic reasoning \emph{via} modalities in a variety
of logical settings.

In \cref{section:intuitionistic-logic-1} I recall the Kripke and algebraic
semantics of intuitionistic logic, and outline the duality between Kripke
semantics and certain complete Heyting algebras, the \emph{prime algebraic
lattices}. Then I extend this duality to intuitionistic modal logic in
\cref{section:modal-logic-1} by showing how a relation that is compatible with
the intuitionistic order---a bimodule---gives rise to modalities through Kan
extension. In \cref{section:intuitionistic-logic-2} I add proofs to
intuitionistic logic, and elevate the duality to one between `two-dimensional
frames' and presheaf categories. I then repeat this exercise for intuitionistic
modal logic in \cref{section:modal-logic-2} by promoting bimodules to
profunctors on the relational side, and adding an adjunction on the categorical
side.

For general background in orders please refer to the book by Davey and Priestley
\cite{davey_2002}. Given a poset $(D, \sqsubseteq_D)$ let the \emph{opposite}
poset $\Op{D}$ be given by reversing the partial order; that is, $x
\sqsubseteq_{\Op{D}} y$ iff $y \sqsubseteq_D x$.
A \emph{lattice} has all finite meets and joins. A \emph{complete lattice} has
arbitrary ones. A complete lattice is \emph{infinitely distributive} just if the
law $a \land \bigvee_{i} b_i = \bigvee_{i} a \land b_i$ holds. Such lattices are
variously called \emph{frames}, \emph{locales}, or \emph{complete Heyting
algebras} \cite{johnstone_1982,mac_lane_1994,picado_2012}.

\section{Intuitionistic Logic I}
  \label{section:intuitionistic-logic-1}

There are many types of semantics for intuitionistic logic, including Kripke,
Beth, topological, and algebraic semantics. Bezhanishvili and Holliday
\cite{bezhanishvili_2019} argue that these form a strict hierarchy, with Kripke
being the least general, and algebraic the most general. I will briefly review
the elements of these extreme points of the spectrum.


The Kripke semantics of intuitionistic logic are given by \emph{Kripke frames},
i.e. partially-ordered sets $(W, \sqsubseteq)$ \cite[\S 2.2]{chagrov_1996}. $W$
is referred to as the set of \emph{possible worlds}, and $\sqsubseteq$ as the
\emph{information order}.  A world $w \in W$ is a `state of knowledge,' and $w
\sqsubseteq v$ means that moving from $w$ to $v$ potentially entails an increase
in the amount of information.

Let $\Up*{W}$ be the set of \emph{upper sets} of $W$, i.e. the subsets $S
\subseteq W$ such that $w \in S$ and $w \sqsubseteq v$ implies $v \in S$. A
\emph{Kripke model} $\Model{M} = (W, \sqsubseteq, V)$ consists of a Kripke frame
$(W, \sqsubseteq)$ as well as a function $V : \Vars \to \Up*{W}$. The
\emph{valuation} $V$ assigns to each propositional variable $p \in \Vars$ an
upper set $V(p) \subseteq W$, which is the set of worlds in which $p$ is true.
The idea is that, once a proposition becomes true, it must remain true as
information increases.

We are now able to inductively define a relation $\KSat[\Model{M}]{w}{\varphi}$
with the meaning that $\varphi$ is true in world $w$ of model $\Model{M}$. The
only interesting clause is that for implication:
\[
  \KSat[\Model{M}]{w}{\varphi \to \psi}\
  \defequiv\
    \forall w \sqsubseteq v.\
    \KSat[\Model{M}]{v}{\varphi} \text{ implies } \KSat[\Model{M}]{v}{\psi}
\]
This definition is famously monotonic:
  if $\KSat[\Model{M}]{w}{\varphi}$ and $w \sqsubseteq v$ then
  $\KSat[\Model{M}]{v}{\varphi}$.
Kripke semantics is sound and complete for intuitionistic logic 
\cite{chagrov_1996,blackburn_2001}.


The algebraic semantics of intuitionistic logic consist of \emph{Heyting
algebras}. These are lattices such that every map $- \land x : L \to L$ has a
right adjoint, i.e. for $x, y \in L$ there is an element $\HeyExp{x}{y} \in L$
such that $c \land x \sqsubseteq y$ iff $c \sqsubseteq \HeyExp{x}{y}$. Such
lattices are always distributive. Assuming that one has an interpretation
$\sem{p} \in L$ of each proposition $p$, each formula $\varphi$ of
intuitionistic logic is inductively mapped to an element $\sem{\varphi} \in L$
using the corresponding algebraic structure.
I will not expound further on Heyting algebras; see \cite[\S
7.3]{chagrov_1996} \cite[\S 1.1]{borceux_1994c} \cite[\S I.8]{mac_lane_1994}.
But I note that they are sound and complete for intuitionistic logic.

\subsection{Prime algebraic lattices}
  \label{section:prime-algebraic-lattices}

Let $(W, \sqsubseteq)$ be any Kripke frame, and let $\TV{} \defeq \{ 0
\sqsubseteq 1 \}$. Consider the poset $\FUNC{W}{\TV}$ of monotonic functions
from $W$ to $\TV$, ordered pointwise. This poset has a number of curious
properties.

First, the monotonicity of $p : W \to \TV$ means that if $p(w) = 1$ and $w
\sqsubseteq v$, then $p(v) = 1$. Hence, the subset $U \defeq \ReIdx{p}{1}$ of
$W$ is an upper set. Conversely, every upper set $U \subseteq W$ gives rise to a
monotonic $p_U : W \to \TV$ by setting $p_U(w) = 1$ if $w \in U$, and $0$
otherwise. Consequently, there is an \emph{order bijection}
\[
  \Up*{W} \cong \FUNC{W}{\TV}
\]
with the order on $\Up*{W}$ being inclusion. I will liberally treat upper sets
and elements of $\FUNC{W}{\TV}$ as the same, but prefer the latter notation for
reasons that will become clear later.

Second, given any $w \in W$, consider its \emph{principal upper set} $\UpSet{w}
\defeq \SetComp{v \in W}{w \sqsubseteq v} \in \FUNC{W}{\TV}$. This set consists
of worlds with potentially more information than that found in world $w$. A
simple argument shows that $w \sqsubseteq v$ iff $\UpSet{v} \subseteq
\UpSet{w}$.\footnote{This is an order-theoretic consequence of the Yoneda
lemma.} Thus, this gives an \emph{order embedding}
\[
  \UpSet : \Op{W} \to \FUNC{W}{\TV}
\]
which can be shown to preserve meets and exponentials.

Third, the poset $\FUNC{W}{\TV}$ is a \emph{complete lattice}: arbitrary joins
and meets are given pointwise. Viewing the elements of $\FUNC{W}{\TV}$ as upper
sets, these joins and meets correspond to arbitrary unions and intersections of
upper sets, which are also upper. Moreover, this lattice satisfies the infinite
distributive law, so it is a \emph{complete Heyting algebra}---synonymously a
\emph{frame} or \emph{locale} \cite{johnstone_1982,picado_2012}. Given two upper
sets $X, Y \subseteq W$ their exponential is given by \cite[\S
1.9]{de_jongh_1966}
\[ 
  \HeyExp{X}{Y} \defeq 
  \SetComp{ w \in W } { \forall w \sqsubseteq v.\ v \in X \text{ implies } v \in Y }
\]

Fourth, the principal upper sets $\UpSet{w}$ are special, in that they are
\emph{prime}.\footnote{Such elements are variously called \emph{completely
  join-irreducible} \cite{raney_1952}, \emph{supercompact}
  \cite{banaschewski_1988} \cite[\S VII.8]{picado_2012}, \emph{completely
  (join-)prime} 
\cite{winskel_2009}, or simply \emph{join-prime} \cite[\S 1.3]{gehrke_2024}.}
An element $d$ of a complete lattice $L$ is \emph{prime} just if
\[
  d \sqsubseteq \Sup X\ \text{ implies }\ \exists x \in X.\ d \sqsubseteq x
\]
This says that $d$ contains a tiny, indivisible fragment of information: as soon
as it approximates a supremum, it must approximate something in the set that is
being upper-bounded. The prime elements of $\FUNC{W}{\TV}$ are exactly the
principal upper sets $\UpSet{w}$ for some $w \in W$.

Fifth, the complete lattice $\FUNC{W}{\TV}$ is \emph{prime algebraic}. This
means that all its elements can be reconstructed by `multiplying' or `sticking
together' prime elements. In symbols, a complete lattice $L$ is prime algebraic
whenever for every element $d \in L$ we have \[ d = \Sup \SetComp{ p \in L }{p
\sqsubseteq d, p \text{ prime }} \] Such lattices are variously called
\emph{completely distributive, algebraic lattices} \cite[\S 10.29]{davey_2002}
or \emph{superalgebraic lattices} \cite[\S VII.8]{picado_2012}. In fact, it can
be shown that any such lattice is essentially of the form $\FUNC{W}{\TV}$, i.e.
a lattice of upper sets; this was shown by Raney in the 1950s \cite{raney_1952},
and independently by Nielsen, Plotkin and Winskel in the 1980s
\cite{nielsen_1981}. See the paper by Winskel for the use of prime algebraic
lattices in semantics \cite{winskel_2009}.

Finally, the fact every element can be reconstructed as a supremum of primes
means that it is possible to canonically extend any monotonic $f : W \to W'$ to
a monotonic $\FUNC{\Op{W}}{\TV} \to W'$, as long as $W'$ is a complete lattice.
Diagrammatically, in the situation
\begin{equation}
  \label{diagram:left-kan-extension-poset}
  \begin{tikzpicture}[node distance=2.5cm, on grid, baseline=(current  bounding  box.center)]
    \node (C) {$W$};
    \node (PSHC) [right = 4cm of C] {$\FUNC{\Op{W}}{\TV}$};
    \node (E) [below = 2cm of PSHC] {$W'$};
    \path[->] (C) edge node [above] {$\UpSet$} (PSHC);
    \path[->] (C) edge node [below = .5em] {$f$} (E);
    \path[->, dashed] (PSHC) 
      edge 
        node [left] {$\LKan{f}$} 
        node [right = .15em] {$\Adjoint$}
      (E);
    \path[->, bend right = 45, dotted] (E) edge node [right] {$\NerveS{f}$} (PSHC);
  \end{tikzpicture}
\end{equation}
there exists a unique $\LKan{f}$ which preserves all joins and satisfies
$\LKan{f}\prn{\UpSet w} = f(w)$. It is given by
\[
  \LKan{f}\prn{S} \defeq \Sup \SetComp{ f(w) }{ w \in S }
\]
$\LKan{f}$ is called the \emph{(left) Kan extension} of $f$ along $\UpSet$. As
$\LKan{f}$ preserves all joins and $\FUNC{W}{\TV}$ is complete it has a right
adjoint $\NerveS{f}$, by the adjoint functor theorem \cite[\S 7.34]{davey_2002}
\cite[\S I.4.2]{johnstone_1982}. For any complete lattice $W'$ this situation
amounts to a bijection
\[
  \Hom[\POSET]{W}{W'} \cong \Hom[\CSEMILATT]{\FUNC{\Op{W}}{\TV}}{W'}
\]
where $\CSEMILATT$ is the category of complete lattices and join-preserving
maps.

Suppose then that we have a Kripke model $(W, \sqsubseteq, V)$. The construction
given above induces a Heyting algebra $\FUNC{W}{\TV}$. Defining $\sem{p} \defeq 
V(p)$ we obtain an algebraic model of intuitionistic logic, which interprets
every formula $\varphi$ as an upper set $\sem{\varphi} \in \FUNC{W}{\TV}$. This
is the upper set of worlds in which a formula is true \cite[Theorem
7.20]{chagrov_1996}:
\begin{theorem}
  \label{theorem:kripke-vs-algebraic-1}
  $\KSat{w}{\varphi}$ if and only if $w \in \sem{\varphi}$.
\end{theorem}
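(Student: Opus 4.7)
The plan is to proceed by induction on the structure of the formula $\varphi$, leveraging the fact that every connective of intuitionistic logic is interpreted in $\FUNC{W}{\TV}$ in a way that exactly mirrors the recursive Kripke clause, so that the equivalence propagates through each case. The statement to prove is that the characteristic function of the upper set $\sem{\varphi}$ coincides with the Kripke satisfaction predicate $\KSat[\Model{M}]{\cdot}{\varphi}$, and this is well-typed precisely because Kripke satisfaction is monotonic.

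For the base case $\varphi = p$, by definition $\sem{p} \defeq V(p)$, and $\KSat{w}{p}$ also unfolds to $w \in V(p)$; the two agree on the nose. For $\top$ and $\bot$ the claim is immediate from the fact that $\top$ is interpreted as the top element $W$ and $\bot$ as $\emptyset$. For conjunction and disjunction, I would use that binary meets and joins in $\FUNC{W}{\TV}$ are pointwise---i.e.\ intersection and union of upper sets---which matches the Kripke clauses $\KSat{w}{\psi \land \chi}$ iff $\KSat{w}{\psi}$ and $\KSat{w}{\chi}$ (and dually for $\lor$), letting the inductive hypothesis close the case by a direct unfolding.

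The one case that carries actual content is implication. Here $\sem{\psi \to \chi} = \HeyExp{\sem{\psi}}{\sem{\chi}}$ is, by the explicit formula for the Heyting exponential on upper sets recalled in the excerpt, the set
\[
  \SetComp{w \in W}{\forall w \sqsubseteq v.\ v \in \sem{\psi} \text{ implies } v \in \sem{\chi}}.
\]
The Kripke clause for $\to$ given earlier in the paper has exactly the same shape, with $v \in \sem{\psi}$ replaced by $\KSat{v}{\psi}$ and similarly for $\chi$. Applying the inductive hypothesis to $\psi$ and $\chi$ at every $v \sqsupseteq w$ yields membership equivalences on the right-hand side, and the two quantified statements coincide term-by-term.

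I do not expect a genuine obstacle: the definition of $\HeyExp{-}{-}$ on upper sets was engineered precisely so as to validate the Kripke clause for $\to$, which is the whole reason one is willing to interpret intuitionistic logic in $\FUNC{W}{\TV}$ in the first place. The only mild subtlety worth being explicit about is that one is free to apply the inductive hypothesis at worlds $v \neq w$, which is legitimate because the inductive hypothesis is universally quantified over worlds, and that well-definedness of the induction on $\sem{\varphi}$ as an upper set is guaranteed by the already-noted fact that $\FUNC{W}{\TV}$ is closed under all the relevant operations.
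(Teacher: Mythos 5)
Your proof is correct and is exactly the standard argument: the paper itself gives no proof but defers to \cite[Theorem 7.20]{chagrov_1996}, which proceeds by the same induction on $\varphi$, with the implication case resting on the explicit formula for $\HeyExp{X}{Y}$ on upper sets. Nothing is missing.
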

%
Thus, every Kripke semantics corresponds to a prime algebraic lattice.


\begin{remark}
  This shows that a Kripke semantics is a particular kind of algebraic
  semantics. Thus, we can deduce the completeness of the latter from the
  completeness of the former: if a formula is valid in all Heyting algebras, it
  must be valid in all prime algebraic lattices, and hence valid in all Kripke
  semantics. If the Kripke semantics is complete, then the formula must be
  provable. Therefore, the algebraic semantics is then complete as well.

  The opposite direction---viz.~proving the completeness of Kripke semantics
  from completeness of the algebraic semantics---cannot be shown constructively.
  The reason is that it requires the construction of \emph{prime filters}, which
  is a weak form of choice. I will investigate the details of this mismatch in a
  sequel paper.
\end{remark}

\subsection{Morphisms}


The simplest kind of morphism between Kripke frames is a \emph{monotonic} map $f
: W \to W'$. Frames and monotonic maps form the category $\POSET$ of posets.
Given a monotonic $f : W \to W'$ we can define a monotonic $\Pre{f} :
\FUNC{W'}{\TV} \to \FUNC{W}{\TV}$ by taking $p : W' \to \TV$ to $p \circ f : W
\to \TV$.
Viewing the elements of $\FUNC{W'}{\TV}$ as upper sets, $\Pre{f}$ maps the upper
set $S \subseteq W'$ to the set $\SetComp{ v \in W' }{ f(v) \in S } \subseteq
W$, which is upper by the monotonicity of $f$. $\Pre{f}$ preserves arbitrary
joins and meets. It is thus the morphism part of a functor
$
  \FUNC{-}{\TV} : \Op{\POSET} \fto \PRIMEALGLATT
$
to the category $\PRIMEALGLATT$ of prime algebraic lattices and complete lattice
homomorphisms. 

Moreover, the functor $\FUNC{-}{\TV}$ is an equivalence! By the adjoint functor
theorem any complete lattice homomorphism $\Pre{f} : L' \to L$ has a left and
right adjoint:
\begin{diagram}
  \label{diagram:essential-poset}
  \begin{tikzpicture}[node distance=2.5cm, on grid, baseline=(current  bounding  box.center)]
    \node (W) {$L$};
    \node (V) [right = 4cm of W] {$L'$};
    \path[->] (V) edge node [above, near start] {$\Pre{f}$} (W);
    \path[->, bend left=30, dashed] (W) edge node [above] {$\RKan{f}$} (V);
    \path[->, bend right=30, dashed] (W) edge node [below] {$\LKan{f}$} (V);
    \node (Adj1) [above right = 0.35cm and 2 cm of W] {$\AdjointLDown$};
    \node (Adj2) [below right = 0.35cm and 2 cm of W] {$\AdjointLDown$};
  \end{tikzpicture}
\end{diagram}
Given a prime algebraic lattice $L$, let $\Primes{L} \subseteq L$ be the
sub-poset of prime elements. It can be shown that the left adjoint $\LKan{f}$
maps primes to primes \cite[Lemma 1.23]{gehrke_2024}. We can thus restrict it to
a function $\Primes{L} \to \Primes{L'}$. This defines a functor $\Primes{-} :
\PRIMEALGLATT \fto \Op{\POSET}$ with the property that $\Primes{\FUNC{W}{\TV}}
\cong W$. All in all, this amounts to a \emph{duality}
\begin{equation}
  \label{equation:primealglatt-duality}
  \Op{\POSET} \Equiv \PRIMEALGLATT
\end{equation}


However, monotonic maps are not particularly well-behaved from the perspective
of logic, as they do not preserve nor reflect `local' truth. 
This is the privilege of \emph{open maps}.

\begin{definition}
  \label{definition:open-poset}
  Let $i_0 : \mathbb{1} \to \TV$ map the unique point of $\mathbb{1}
  \defeq \{ \ast \}$ to $0 \in \TV$.
  A monotonic map $f : W \to W'$ of Kripke frames is \emph{open} just when it has
  the right lifting property with respect to $i_0 : \mathbb{1} \to \TV$, i.e.
  when every commuting diagram of the form
  \[
    \begin{tikzpicture}[diagram]
      \SpliceDiagramSquare{
        height = 1.2cm,
        width = 3cm,
        nw = \mathbb{1},
        sw = \TV,
        west = i_0,
        ne = W,
        se = W',
        east = f,
      }
      \path[->, dotted] (sw) edge (ne);
    \end{tikzpicture}
  \]
  in $\POSET$ has a diagonal filler (dashed) that makes it commute.
\end{definition}
In other words, $f$ is open if whenever $f(w) \sqsubseteq v'$ there exists a $w'
\in W$ with $w \sqsubseteq w'$ and $f(w') = v'$.\footnote{Such morphisms are
  often called \emph{p-morphisms} \cite[\S 2.3]{chagrov_1996} or \emph{bounded
  morphisms} \cite[\S 2.1]{blackburn_2001}. According to Goldblatt
  \cite{goldblatt_2006} open maps were introduced by de Jongh and Troelstra
  \cite{de_jongh_1966} in intuitionistic logic, and by Segerberg
  \cite{segerberg_1968} in modal logic. More rarely they are called
  \emph{functional simulations}, and led us to bisimulations \cite[\S
  3.2]{sangiorgi_2009}. The name is chosen because such maps are open with
  respect to the \emph{Alexandrov topology} on a poset, whose open sets are the
upper sets \cite[\S 1.8]{johnstone_1982}.} Open maps send upper sets to upper
sets \cite[Prop. 2.13]{chagrov_1996}. Thus
\begin{lemma}
  \label{lemma:kripke-open-truth}
  Let $\Model{M} = (W, \sqsubseteq, V)$ and $\Model{N} = (W', \sqsubseteq, V')$
  be Kripke models, and $f : W \to W'$ be open. Suppose $V = f^{-1} \circ V'$,
  i.e. $w \in V(p)$ iff $f(w) \in V'(p)$. Then $\KSat[\Model{M}]{w}{\varphi}$
  iff $\KSat[\Model{N}]{f(w)}{\varphi}$.
\end{lemma}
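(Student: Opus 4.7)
The proof will proceed by straightforward induction on the structure of $\varphi$, with the openness hypothesis doing real work only in one case. Let me lay out the plan.

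For the base case, when $\varphi$ is a propositional variable $p$, the biconditional $\KSat[\Model{M}]{w}{p}$ iff $\KSat[\Model{N}]{f(w)}{p}$ is exactly the hypothesis $w \in V(p)$ iff $f(w) \in V'(p)$. The cases for $\land$, $\lor$, and $\bot$ are immediate from the induction hypothesis, since the semantic clauses in each model are local to the world in question.

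The only genuinely interesting case is implication, where one direction requires monotonicity of $f$ and the other requires openness. For the direction $\KSat[\Model{N}]{f(w)}{\varphi \to \psi}$ implies $\KSat[\Model{M}]{w}{\varphi \to \psi}$, I take $w \sqsubseteq w'$ with $\KSat[\Model{M}]{w'}{\varphi}$; by the induction hypothesis $\KSat[\Model{N}]{f(w')}{\varphi}$, and since $f(w) \sqsubseteq f(w')$ by monotonicity, the hypothesis on $f(w)$ gives $\KSat[\Model{N}]{f(w')}{\psi}$, hence by induction $\KSat[\Model{M}]{w'}{\psi}$. Conversely, assume $\KSat[\Model{M}]{w}{\varphi \to \psi}$ and pick $f(w) \sqsubseteq v'$ with $\KSat[\Model{N}]{v'}{\varphi}$. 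This is where openness enters: by \cref{definition:open-poset}, applied to the situation $f(w) \sqsubseteq v'$, there exists $w' \in W$ with $w \sqsubseteq w'$ and $f(w') = v'$. Then the induction hypothesis gives $\KSat[\Model{M}]{w'}{\varphi}$, the assumption yields $\KSat[\Model{M}]{w'}{\psi}$, and a final appeal to induction delivers $\KSat[\Model{N}]{v'}{\psi}$, as required.

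The main conceptual obstacle—though not a technical one—is simply recognising that the lifting property from \cref{definition:open-poset} is precisely what is needed to ``pull back'' a world $v'$ above $f(w)$ to a witness $w'$ above $w$. Once this reformulation of openness is in hand, the induction writes itself. No auxiliary lemmas beyond monotonicity of $f$ and the two-variable characterisation of openness are required.
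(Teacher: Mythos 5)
Your proof is correct and is exactly the standard induction the paper defers to (it states the lemma without proof, citing the textbook treatment): the hypothesis $V = f^{-1}\circ V'$ handles the base case, monotonicity of $f$ gives one direction of the implication clause, and the lifting property of \cref{definition:open-poset} gives the other. The only point worth making explicit in a write-up is that the induction hypothesis must be quantified over all worlds, since you invoke it at $w'$ rather than $w$ — which you do implicitly and correctly.
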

Write $\KVal{W}{\varphi}$ to mean that $\KSat[(W, \sqsubseteq, V)]{w}{\varphi}$
for any valuation $V$ and $w \in W$. Then
\begin{lemma}
  \label{lemma:kripke-open-surj-truth}
  If $f : W \to W'$ is open and surjective, then $\KVal{W}{\varphi}$ implies
  $\KVal{W'}{\varphi}$.
\end{lemma}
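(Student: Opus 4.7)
The plan is to use \cref{lemma:kripke-open-truth} to transport satisfaction across $f$, with surjectivity ensuring that every world of $W'$ is reached.

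Fix an arbitrary valuation $V'$ on $W'$ and an arbitrary world $w' \in W'$; I must show $\KSat[(W', \sqsubseteq, V')]{w'}{\varphi}$. First, I would define a valuation $V$ on $W$ by pulling back along $f$: set $V(p) \defeq f^{-1}(V'(p))$. This is an upper set because $f$ is monotonic (so the preimage of any upper set is upper), and hence gives a legitimate Kripke model $\Model{M} = (W, \sqsubseteq, V)$. By construction, $V = f^{-1} \circ V'$ in the sense demanded by \cref{lemma:kripke-open-truth}.

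Next, by surjectivity of $f$, pick any $w \in W$ with $f(w) = w'$. The hypothesis $\KVal{W}{\varphi}$ gives $\KSat[\Model{M}]{w}{\varphi}$ outright (as $\KVal{W}{-}$ quantifies over all valuations and all worlds, including $V$ and $w$). Invoking \cref{lemma:kripke-open-truth} with $\Model{N} = (W', \sqsubseteq, V')$, whose applicability rests precisely on $f$ being \emph{open}, transfers this to $\KSat[\Model{N}]{f(w)}{\varphi}$, i.e. $\KSat[\Model{N}]{w'}{\varphi}$. Since $V'$ and $w'$ were arbitrary, $\KVal{W'}{\varphi}$.

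There is no real obstacle here; the proof is essentially bookkeeping. The only subtle point is recognising that both hypotheses on $f$ are genuinely used: openness feeds \cref{lemma:kripke-open-truth}, while surjectivity is what lets us cover every $w' \in W'$ by some $w$ in the domain. Without the latter, one could only conclude validity on the image of $f$.
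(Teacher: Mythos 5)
Your proof is correct and matches the paper's intent exactly: the paper states this lemma without proof, immediately after \cref{lemma:kripke-open-truth}, clearly intending precisely your argument of pulling back the valuation along $f$ and using surjectivity to reach every world of $W'$. No issues.
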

Recall now the induced map $\Pre{f} : \FUNC{W'}{\TV} \to \FUNC{W}{\TV}$ for a
monotonic $f : W \to W'$. The following lemma allows us to characterise the
openness and surjectivity of $f$ in terms of $\Pre{f}$.
\begin{lemma} \hfill
  \label{lemma:fstar-open}
  \begin{enumerate}
    \item 
      $f : W \to W'$ is open iff $\Pre{f} : \FUNC{W'}{\TV} \to \FUNC{W}{\TV}$
      preserves exponentials.
    \item
      $f : W \to W'$ is surjective iff $\Pre{f} : \FUNC{W'}{\TV} \to
      \FUNC{W}{\TV}$ is injective.
  \end{enumerate}
\end{lemma}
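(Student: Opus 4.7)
The two parts are essentially independent; part (2) is an easy preimage calculation, so the substance is in part (1). For both parts, one direction will be straightforward and the other will be a contrapositive witnessed by carefully chosen test upper sets.

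\textbf{Part (1), easy direction.} Unfolding definitions, $f^{-1}(\HeyExp{X}{Y})$ consists of those $w$ such that every $v' \sqsupseteq f(w)$ with $v' \in X$ lies in $Y$, whereas $\HeyExp{f^{-1}(X)}{f^{-1}(Y)}$ consists of those $w$ such that every $w' \sqsupseteq w$ with $f(w') \in X$ satisfies $f(w') \in Y$. By monotonicity of $f$ the first set is always contained in the second, with no openness needed. If $f$ is open, the reverse inclusion follows by lifting: given $w$ in the right-hand set and $v' \sqsupseteq f(w)$ with $v' \in X$, openness produces $w' \sqsupseteq w$ with $f(w') = v'$, and $v' \in Y$ follows.

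\textbf{Part (1), hard direction.} This is the main obstacle: extracting an openness witness from the purely algebraic statement that $\Pre{f}$ preserves exponentials. The trick is to find test upper sets whose exponential precisely detects non-openness. Given $w$ and $v' \sqsupseteq f(w)$, I take $X \defeq \UpSet{v'}$ and $Y \defeq \UpSet{v'} \setminus \{v'\}$; the latter is upper because $\sqsubseteq$ is antisymmetric. A direct calculation gives $\HeyExp{X}{Y} = W' \setminus \DownSet{v'}$, and hence $f^{-1}(\HeyExp{X}{Y}) = \{u : f(u) \not\sqsubseteq v'\}$; in particular $w$ is \emph{not} in this set. On the other side, $\HeyExp{f^{-1}(X)}{f^{-1}(Y)} = \{u : \forall u \sqsubseteq u'.\ f(u') \neq v'\}$. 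If no $w' \sqsupseteq w$ satisfies $f(w') = v'$, then $w$ lies in this set, contradicting preservation of the exponential. So $f$ must be open.

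\textbf{Part (2).} For surjectivity, the forward direction is immediate: if $f$ is surjective and $f^{-1}(S) = f^{-1}(T)$, then any $v \in S$ is $f(w)$ for some $w$, and $w \in f^{-1}(S) = f^{-1}(T)$ gives $v \in T$. For the contrapositive, if $v \in W' \setminus \Image{f}$, then the distinct upper sets $\UpSet{v}$ and $\UpSet{v} \setminus \{v\}$ (again using antisymmetry to check the latter is upper) have identical preimages, since $f(w) = v$ is impossible. Hence $\Pre{f}$ fails to be injective.
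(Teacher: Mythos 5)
Your proof is correct: both directions of each part check out, and the key step---detecting failure of openness (resp.\ surjectivity) with the test pair $\UpSet{v'}$ and $\UpSet{v'} \setminus \{v'\}$, whose upperness relies on antisymmetry of the partial order---is exactly the standard argument for characterising p-morphisms algebraically. The paper states this lemma without proof, so there is nothing to diverge from; your computation of $\HeyExp{X}{Y}$ as the complement of the principal lower set of $v'$ and the resulting separation of $w$ is the natural way to fill the gap.
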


Consequently, the duality \eqref{equation:primealglatt-duality} may be
restricted to two wide subcategories:
\begin{align}
  \label{equation:posetopen-primealglattcc}
  \Op{\POSET}_\text{open} \Equiv \PRIMEALGLATTCC
  & \quad &
  \Op{\POSET}_\text{open,surj} \Equiv \PRIMEALGLATTCCINJ
\end{align}
The morphisms of the categories to the left of $\Equiv$ are open (resp. and
surjective) maps, and the morphisms of the categories to its right are
\emph{complete Heyting homomorphisms}, i.e. complete lattice homomorphisms that
preserve exponentials (resp. and are injective).

Finally, let us consider the classical case---as a sanity-check. This amounts to
restricting $\POSET$ to its subcategory of discrete orders, i.e. $\SET$. In this
case every map is open. The corresponding restriction on the other side is to
the category $\CABA$ of \emph{complete atomic Boolean algebras}, yielding the
usual \emph{Tarski duality} $\Op{\SET} \Equiv \CABA$ \cite{kishida_2018}.

\subsection{Related work}

The origins of the construction of a Heyting algebra from a Kripke frame seems
to be lost in the mists of time. The earliest occurrence I have located is in
the book by Fitting \cite[\S 1.6]{fitting_1969}, where it is attributed to an
exercise in the book by Beth \cite{beth_1959}. 

The duality \eqref{equation:primealglatt-duality} appears to be
folklore---folklore enough to be included as an exercise in new textbooks
\cite[Ex. 1.3.10]{gehrke_2024}; see also Ern\'{e} \cite{erne_1991}. However, I
have not been able to find any mention of the dualities of
\eqref{equation:posetopen-primealglattcc} in the literature.

Both the dualities \eqref{equation:primealglatt-duality} and
\eqref{equation:posetopen-primealglattcc} involve just prime algebraic lattices,
which is a far cry from encompassing all Heyting algebras. It is possible to do
so, by enlarging the category $\POSET$ to a class of ordered topological spaces
called \emph{descriptive frames} \cite[\S 8.4]{chagrov_1996}. The resulting
duality is called \emph{Esakia duality} \cite{esakia_2019} \cite[\S
4.6]{gehrke_2024} \cite[\S 2.3]{bezhanishvili_2006}.

A survey on dualities for classical modal logic is given by Kishida
\cite{kishida_2018}.

\section{Modal Logic I}
  \label{section:modal-logic-1}

The task now is to extend the results of \cref{section:intuitionistic-logic-1}
to \emph{intuitionistic modal logic}.

There is disagreement on what a minimal intuitionistic modal logic is. This
arises no matter the methodology we choose---be it relational, algebraic, or
proof-theoretic. The situation becomes even more complex if we include a diamond
modality ($\lozenge$): see Das and Marin \cite{das_2023} and Wolter and
Zakharyaschev \cite{wolter_1999b} for a discussion.

In this paper I will adopt the \emph{intuitionistic propositional logic with
Galois connections} of Dzik, J\"{a}rvinen, and Kondo \cite{dzik_2010}, for
reasons that will become clear in a moment. This extends intuitionistic logic
with modalities $\blacklozenge$ and $\Box$, and the two inference rules
\begin{mathpar}
  \inferrule{
    \blacklozenge \varphi \to \psi
  }{
    \varphi \to \Box \psi
  }
  \and
  \text{and}
  \and
  \inferrule{
    \varphi \to \Box \psi
  }{
    \blacklozenge \varphi \to \psi
  }
\end{mathpar}
These rules correspond to a \emph{Galois connection} \cite[\S
7.23]{davey_2002}, i.e. an adjunction $\blacklozenge \Adjoint \Box$ between
posets. They imply the derivability of the following rules, amongst others
\cite[Prop. 2.1]{dzik_2010}.
\begin{mathpar}
  \inferrule{
    \varphi \to \psi
  }{
    \Box \varphi \to \Box \psi
  }
  \and
  \inferrule{
    \varphi
  }{
    \Box \varphi
  }
  \and
  \inferrule{
  }{
    \Box \top
  }
  \and
  \inferrule{
    \blacklozenge \bot
  }{
    \bot
  }
  \and
  \inferrule{
    \varphi \to \psi
  }{
    \blacklozenge \varphi \to \blacklozenge \psi
  }
  \and
  \inferrule{
  }{
    \blacklozenge (\varphi \lor \psi) \leftrightarrow \blacklozenge \varphi \lor
    \blacklozenge \psi
  }
  \and
  \inferrule{
  }{
    \Box (\varphi \land \psi) \leftrightarrow \Box \varphi \land
    \Box \psi
  }
\end{mathpar}
The notation of the `black diamond' modality $\blacklozenge$ may appear unusual.
However, I will argue that this logic is, in a way, the canonical intuitionistic
modal logic.


The Kripke semantics of classical modal logic is given by a \emph{modal frame}
$(W, R)$, which consists of a set $W$ and an arbitrary \emph{accessibility
relation} $R \subseteq W \times W$ \cite[\S 1]{blackburn_2001}. If the same set
of worlds $W$ is already part of an intuitionistic Kripke frame $(W,
\sqsubseteq)$ we must take care to ensure that $\sqsubseteq$ and $R$ are
\emph{compatible}. There are many compatibility conditions that one can consider
\cite{plotkin_1986} \cite[\S 3.3]{simpson_1994}. However, I will take a hint
from the category theory literature, and seek a canonical definition of what it
means for a relation to be compatible with a poset.

Recall that relations can be presented as functions $R : W \times W \to \TV$
which map a pair of worlds $(w, v)$ to $1$ iff $w \mathbin{R} v$. I will ask
that $R$ is such function, but with a twist:
\begin{definition}
  A \emph{bimodule} $R : W_1 \pfto W_2$ 
  is a monotonic map $R : \Op{W_1} \times W_2 \to \TV$.
\end{definition}
Thus, a relation $R \subseteq W_1 \times W_2$ is a bimodule just if $w'
\sqsubseteq w \mathbin{R} v \sqsubseteq v'$ implies $w' \mathbin{R} v'$. This
means that $R$ can absorb changes in information on either side: contravariantly
on the first component, and covariantly on the second. This is a standard,
minimal way to define what it means to be `a relation in $\POSET$.' It is
strongly reminiscent of bimodules in abstract algebra.

We can then define a \emph{modal Kripke frame} $(W, \sqsubseteq, R)$ to be a
Kripke frame $(W, \sqsubseteq)$ equipped with a bimodule $R : W \pfto W$. A
\emph{modal Kripke model} $\Model{M} = (W, \sqsubseteq, R, V)$
adds to this a function $V : \Vars \to \Up*{W}$. We extend
$\KSat[\Model{M}]{w}{\varphi}$ to modal formulae:
\begin{align*}
  \KSat[\Model{M}]{w}{\blacklozenge \varphi}\
  &\defequiv\
    \exists v.\ v \mathbin{R} w \text{ and } \KSat[\Model{M}]{v}{\varphi}
  \\
  \KSat[\Model{M}]{w}{\Box \varphi}\
  &\defequiv\
    \forall v.\ w \mathbin{R} v \text{ implies } \KSat[\Model{M}]{v}{\varphi}
\end{align*}
There are a number of things to note about this definition. First, there is a
deep duality between the clauses: not only do we exchange $\forall$ for
$\exists$, but we also flip the variance of $R$. As a result, $\blacklozenge$
uses the relation in the \emph{opposite variance} to the more traditional
$\lozenge$ modality (hence the change in notation). Second, the clause for the
$\Box$ modality is the traditional one; some streams of work on intuitionistic
modal logic adopt a slightly different one \cite{plotkin_1986,simpson_1994},
which is equivalent to this in the presence of the bimodule condition. Finally,
this definition is monotonic: the bimodule conditions on $R$ suffice to show
that
  if $\KSat[\Model{M}]{w}{\varphi}$ and $w
  \sqsubseteq v$ then $\KSat[\Model{M}]{v}{\varphi}$.
Dzik et al. \cite[\S 5]{dzik_2010} prove that this semantics is sound and complete.


The algebraic semantics of this logic is given by a Heyting algebra $H$ equipped
with two monotonic maps $\blacklozenge, \Box : H \to H$ which form an adjunction
$\blacklozenge \Adjoint \Box$, i.e. a Galois connection. Dzik et al. \cite[\S
4]{dzik_2010} prove that this semantics is also sound and complete.


We are now in a position to relate the Kripke and algebraic semantics of this
intuitionistic modal logic. Let $(W, \sqsubseteq, R)$ be a modal Kripke frame,
and consider the map $\lambda R : \Op{W} \to \FUNC{W}{\TV}$ obtained by
cartesian closure of $\POSET$. This map takes $w \in W$ to the upper set
$\SetComp{ v \in W }{ w \mathbin{R} v }$ of worlds accessible from $w$. Putting
$\lambda R$ in \eqref{diagram:left-kan-extension-poset} we obtain through Kan
extension the diagram
\begin{equation}
  \begin{tikzpicture}[node distance=2.5cm, on grid, baseline=(current bounding box.center)]
    \node (C) {$\Op{W}$};
    \node (PSHC) [right = 4cm of C] {$\FUNC{W}{\TV}$};
    \node (E) [below = 2cm of PSHC] {$\FUNC{W}{\TV}$};
    \path[->] (C) edge node [above] {$\UpSet$} (PSHC);
    \path[->] (C) edge node [below = .5em] {$\lambda R$} (E);
    \path[->, dashed] (PSHC) 
      edge 
        node [left] {$\blacklozenge_R$} 
        node [right = .15em] {$\Adjoint$}
      (E);
    \path[->, bend right = 45, dotted] (E) edge node [right] {$\Box_R$} (PSHC);
  \end{tikzpicture}
\end{equation}
where we write $\blacklozenge_R$ for $\LKan{\lambda R}$ and $\Box_R$ for
$\NerveS{\lambda R}$. It can be shown that these maps are given by
\begin{align*}
  \blacklozenge_R(S)
    &\defeq \SetComp{ w \in W }{ \exists v.\ v \mathbin{R} w \text{ and } v \in S }
    \\
  \Box_R(S)
    &\defeq \SetComp{ w \in W }{ \forall v.\ w \mathbin{R} v \text{ implies } v \in S }
\end{align*}
Thus, any bimodule $R$ defines an adjunction $\blacklozenge_R \Adjoint \Box_R$
on $\FUNC{W}{\TV}$. Correspondingly, any adjunction $\blacklozenge \Adjoint
\Box$ on $\FUNC{W}{\TV}$ yields a monotonic map $\blacklozenge \circ \UpSet*{-}
: \Op{W} \to \FUNC{W}{\TV}$, which uniquely corresponds to a bimodule $\Op{W}
\times W \to \TV$ by the cartesian closure of $\POSET$. 

Thus, starting from a bimodule, i.e. a relation that is compatible with the
information order, we have canonically obtained a model of intuitionistic modal
logic on $\FUNC{W}{\TV}$ through Kan extension: $\FUNC{W}{\TV}$ is a complete
Heyting algebra, and we define $\sem{\blacklozenge \varphi} = \blacklozenge_R
\sem{\varphi}$ and $\sem{\Box \varphi} = \Box_R \sem{\varphi}$. We immediately
obtain a modal analogue to Theorem \ref{theorem:kripke-vs-algebraic-1}:
\begin{theorem}
  \label{theorem:kripke-vs-algebraic-2}
  For any modal formula $\varphi$, $\KSat{w}{\varphi}$ if and only if $w \in
  \sem{\varphi}$.
\end{theorem}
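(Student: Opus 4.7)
The plan is to proceed by induction on the structure of the modal formula $\varphi$, reusing \cref{theorem:kripke-vs-algebraic-1} wholesale to dispatch all non-modal cases (propositional variables, $\top$, $\bot$, $\land$, $\lor$, $\to$), and only treating the two new modal cases explicitly. Because the semantic clauses for $\blacklozenge$ and $\Box$ are defined by \emph{applying} the operators $\blacklozenge_R$ and $\Box_R$ to the interpretation of the subformula, and because the explicit pointwise formulas for $\blacklozenge_R$ and $\Box_R$ derived from Kan extension are \emph{literally} the Kripke clauses, the inductive step is really just an unfolding of definitions.

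First, I would remark that the non-modal cases go through unchanged: the interpretation of the connectives in $\FUNC{W}{\TV}$ is the one used in \cref{theorem:kripke-vs-algebraic-1}, and the presence of $R$ affects only the new clauses for $\blacklozenge$ and $\Box$. So it suffices to establish the two modal cases.

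For the case $\varphi = \blacklozenge \psi$, I would compute
\[
  w \in \sem{\blacklozenge \psi}
  \;=\; \blacklozenge_R\sem{\psi}
  \;=\; \SetComp{ w \in W }{ \exists v.\ v \mathbin{R} w \text{ and } v \in \sem{\psi} }
\]
using the explicit formula for $\blacklozenge_R$ obtained from the Kan extension, and then apply the inductive hypothesis $v \in \sem{\psi} \iff \KSat{v}{\psi}$ to rewrite the right-hand side as $\exists v.\ v \mathbin{R} w$ and $\KSat{v}{\psi}$, which is by definition $\KSat{w}{\blacklozenge \psi}$. The case $\varphi = \Box \psi$ is entirely analogous, using the explicit formula $\Box_R(S) = \SetComp{w}{\forall v.\ w \mathbin{R} v \text{ implies } v \in S}$.

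There is no real obstacle: the hard work has already been done in constructing $\blacklozenge_R$ and $\Box_R$ via Kan extension and in verifying that these operators admit the displayed pointwise descriptions. The only subtlety worth flagging is that the matchup between the Kripke clauses and the Kan-extension formulas is what a posteriori justifies calling $\blacklozenge_R$ and $\Box_R$ the \emph{canonical} intuitionistic modalities induced by the bimodule $R$; once that correspondence is in hand, the theorem is an induction by inspection.
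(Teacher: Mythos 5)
Your proposal is correct and matches the paper's intent exactly: the paper offers no separate proof, asserting that the result is ``immediately'' obtained because the pointwise descriptions of $\blacklozenge_R$ and $\Box_R$ produced by Kan extension coincide literally with the Kripke satisfaction clauses, so the induction of Theorem~\ref{theorem:kripke-vs-algebraic-1} extends by unfolding definitions in the two new cases. The only nit is notational: in your display you equate the proposition $w \in \sem{\blacklozenge\psi}$ with the set $\blacklozenge_R\sem{\psi}$; you mean that $\sem{\blacklozenge\psi} = \blacklozenge_R\sem{\psi}$ as sets and then test membership of $w$.
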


\subsection{Morphisms}

We define a category $\BIMOD$ with bimodules $R : W_1 \pfto W_2$ as objects. A
\emph{bimodule morphism} from $R : W_1 \pfto W_2$ to $R' : W'_1 \pfto W'_2$ is a
pair $(f, g)$ of monotonic maps $f : W_1 \to W'_1$ and $g : W_2 \to W'_2$ such
that $R(w, v) \sqsubseteq R'(f(w), g(v))$. Stated in terms of relations, it must
be that $w \mathbin{R} v$ implies $f(w) \mathbin{R'} g(v)$. 

We define the subcategory $\EBIMOD$ to consist of (endo)bimodules $R : W \pfto
W$ and pairs of maps $(f, f)$. Thus, objects are bimodules on a single poset
$W$, and morphisms are monotonic maps $f : W \to W'$ that preserve the relation,
i.e. $w \mathbin{R} v$ implies $f(w) \mathbin{R} f(v)$. In other words, the
objects of $\EBIMOD$ are modal Kripke frames, and the morphisms are monotonic,
relation-preserving maps.

Recall the adjunctions and modalities induced by a monotonic $f : W \to W'$:
\begin{equation}
  \label{diagram:essential-uppers-modal}
  \begin{tikzpicture}[node distance=2.5cm, on grid, baseline=(current  bounding  box.center)]
    \node (W) {$\FUNC{W}{\TV}$};
    \node (V) [right = 4cm of W] {$\FUNC{W'}{\TV}$};
    \path[->] (V) edge node [above, near start] {$\Pre{f}$} (W);
    \path[->, bend left=30, dashed] (W) edge node [above] {$\RKan{f}$} (V);
    \path[->, bend right=30, dashed] (W) edge node [below] {$\LKan{f}$} (V);
    \node (Adj1) [above right = 0.35cm and 2 cm of W] {$\AdjointLDown$};
    \node (Adj2) [below right = 0.35cm and 2 cm of W] {$\AdjointLDown$};
    \path[->] (W) edge [out=135, in=225, loop] node [left] {$\Box_{R}$} (W);
    \path[->] (V) edge [out=45, in=315, loop] node [right] {$\Box_{R'}$} (V);
  \end{tikzpicture}
\end{equation}

\begin{lemma}
  \label{lemma:morphism-bimod-iff-inclusion}
  $f : W \to W'$ is a morphism of bimodules $f : R \to R'$ iff
  $\Pre{f}\Box_{R'} \subseteq \Box_R \Pre{f}$.
\end{lemma}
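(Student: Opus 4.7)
The plan is to unpack both sides in terms of elements of upper sets and then prove the two implications separately. Recalling that $\Pre{f}(T)$ is just $f^{-1}(T)$, the inclusion $\Pre{f}\Box_{R'} \subseteq \Box_R\Pre{f}$ says: for every upper set $T \subseteq W'$ and every $w \in W$, if every $R'$-successor of $f(w)$ lies in $T$, then every $R$-successor of $w$ has its $f$-image in $T$. The morphism condition, meanwhile, is the pointwise statement $w \mathbin{R} v \Rightarrow f(w) \mathbin{R'} f(v)$.

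For the forward direction I would assume $f$ is a bimodule morphism, fix an upper set $T \subseteq W'$ and some $w \in \Pre{f}\Box_{R'}(T)$, unfold to get $\forall v'.\ f(w) \mathbin{R'} v' \Rightarrow v' \in T$, and then for any $v$ with $w \mathbin{R} v$ use the morphism condition to obtain $f(w) \mathbin{R'} f(v)$, whence $f(v) \in T$, i.e.\ $v \in \Pre{f}(T)$. This is almost definitional.

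The backward direction is where something substantive happens: to recover a pointwise fact about $R, R'$ from an inclusion quantified over all upper sets $T$, one needs to supply a clever $T$. Given $w \mathbin{R} v$, I would take the test set
\[
  T \;\defeq\; \SetComp{ w'' \in W' }{ f(w) \mathbin{R'} w'' }.
\]
The key observation is that $T$ is an upper set precisely because $R'$ is a \emph{bimodule}: if $f(w) \mathbin{R'} w''$ and $w'' \sqsubseteq w'''$, then $f(w) \mathbin{R'} w'''$. Trivially $f(w) \in \Box_{R'}(T)$, so $w \in \Pre{f}\Box_{R'}(T)$. Applying the hypothesis gives $w \in \Box_R\Pre{f}(T)$, and instantiating at $v$ with $w \mathbin{R} v$ yields $f(v) \in T$, which is exactly $f(w) \mathbin{R'} f(v)$.

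The only real obstacle is spotting that choice of $T$; once one sees that the bimodule condition on $R'$ is exactly what makes the `successor set' $R'(f(w), -)$ upward closed, the rest is bookkeeping. It is worth noting that this mirrors the Yoneda-style reasoning behind the order embedding $\UpSet : \Op{W} \to \FUNC{W}{\TV}$ from \cref{section:prime-algebraic-lattices}: the universal test that detects $R'$-successors is the representable $R'(f(w), -)$ itself.
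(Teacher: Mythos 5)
Your proof is correct. The paper does not write out a proof of this lemma, but your argument is the intended one: the forward direction is definitional, and your choice of test set $T = R'(f(w), -)$ is precisely the element-level shadow of the Yoneda/adjunction transposition that the paper uses to prove the categorified analogue, \cref{lemma:endoprofunctors-nat} (there one transposes across $\LKan{f} \Adjoint \Pre{f}$ and applies the Yoneda lemma; here the representable upper set $R'(f(w),-)$, upward closed exactly by the bimodule condition, plays the same role).
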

This constitutes a duality
\begin{equation}
  \label{equation:bimod-primealglatto}
  \Op{\EBIMOD} \Equiv \PRIMEALGLATTO
\end{equation}
where $\PRIMEALGLATTO$ is the category with objects $(L, \Box_L)$, where $L$ is
a prime algebraic lattice and $\Box_L : L \to L$ is an operator that preserves
all meets. By the adjoint functor theorem, such operators always have a left
adjoint $\blacklozenge_L : L \to L$. Thus, this category contains algebraic
models of intuitionistic modal logic (but not all of them). By the preceding
section each such adjunction corresponds uniquely to a bimodule. The morphisms
of $\PRIMEALGLATTO$ are complete lattice homomorphisms $h : L \to L'$ such that
$h \Box_L \sqsubseteq \Box_{L'} h$. By the preceding lemma they correspond
precisely to morphisms of bimodules.

However, as with monotone maps, morphisms of bimodules do not preserve local
truth; for that we need a notion of \emph{modally open} maps. 

\begin{definition}
  \label{definition:modally-open}
  Let $(W, \sqsubseteq, R)$ and $(W', \sqsubseteq, R')$ be modal Kripke frames.
  A bimodule morphism $f : R \to R'$ is \emph{modally open} just if whenever
  $f(w) \mathbin{R'} v$ then there exists a $w' \in W$ with $w \mathbin{R} w'$
  and $f(w') \sqsubseteq v$.
\end{definition}
This is similar to \cref{definition:open-poset}, but ever so slightly weaker:
instead of requiring $f(w') = v'$, it requires that the information in $f(w')$
can be increased to $v'$. Like \cref{definition:open-poset}, it can also be
written homotopy-theoretically, but that requires some ideas from double
categories that are beyond the scope of this paper. We have the analogous result
about preservation of truth:
\begin{lemma}
  \label{lemma:kripke-open-modal-truth}
  Let $\Model{M} = (W, \sqsubseteq, R, V)$ and $\Model{N} = (W', \sqsubseteq,
  R', V')$ be modal Kripke models, $f : W \to W'$ be open and modally open, and
  $V = f^{-1} \circ V'$. Then $\KSat[\Model{M}]{w}{\varphi}$
  iff~$\KSat[\Model{N}]{f(w)}{\varphi}$.
\end{lemma}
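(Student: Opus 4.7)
Proceed by structural induction on $\varphi$, assuming throughout that $w \in W$ and writing $V, V'$ for the two valuations. The atomic case $\varphi = p$ is immediate from $V = f^{-1} \circ V'$. The cases for $\bot, \top, \land, \lor$ reduce pointwise to the inductive hypothesis. The case $\varphi = \psi \to \chi$ is exactly the implication step of Lemma \ref{lemma:kripke-open-truth}: the forward direction uses monotonicity of $f$ to transport hypotheses $w \sqsubseteq w''$ to $f(w) \sqsubseteq f(w'')$, and the backward direction uses openness of $f$ to lift a counterexample $f(w) \sqsubseteq v'$ back to some $w'' \sqsupseteq w$ with $f(w'') = v'$. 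The heart of the argument lies in the two new modal clauses.

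\textbf{Box case} ($\varphi = \Box \psi$). For the forward direction, assume $\KSat[\Model{M}]{w}{\Box \psi}$ and take $v'$ with $f(w) \mathbin{R'} v'$. Modal openness produces $w'' \in W$ with $w \mathbin{R} w''$ and $f(w'') \sqsubseteq v'$; the hypothesis yields $\KSat[\Model{M}]{w''}{\psi}$, the inductive hypothesis transports this to $\KSat[\Model{N}]{f(w'')}{\psi}$, and monotonicity of Kripke truth along $f(w'') \sqsubseteq v'$ finishes. The backward direction is easier: from $w \mathbin{R} w''$, the fact that $f$ is a bimodule morphism delivers $f(w) \mathbin{R'} f(w'')$, and the inductive hypothesis converts $\KSat[\Model{N}]{f(w'')}{\psi}$ into $\KSat[\Model{M}]{w''}{\psi}$.

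\textbf{Diamond case} ($\varphi = \blacklozenge \psi$). The forward direction runs symmetrically to the backward Box step: given a witness $v$ with $v \mathbin{R} w$ and $\KSat[\Model{M}]{v}{\psi}$, the morphism property yields $f(v) \mathbin{R'} f(w)$, and the inductive hypothesis supplies $\KSat[\Model{N}]{f(v)}{\psi}$. The backward direction is the step I expect to be the main obstacle: from $v' \mathbin{R'} f(w)$ and $\KSat[\Model{N}]{v'}{\psi}$, one must produce $v \in W$ with $v \mathbin{R} w$ and $\KSat[\Model{M}]{v}{\psi}$. Here $f(w)$ sits on the \emph{right} of $R'$, whereas Definition \ref{definition:modally-open} lifts only along the \emph{left} component. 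I expect this case to require either (i) reading modal openness as tacitly symmetric, i.e.\ also demanding that $v' \mathbin{R'} f(w)$ yields some $v \in W$ with $v \mathbin{R} w$ and $v' \sqsubseteq f(v)$, or (ii) an independent dual condition on $f$; given such a witness, monotonicity of truth along $v' \sqsubseteq f(v)$ combined with the inductive hypothesis finishes the proof.

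\textbf{Where the difficulty concentrates.} The non-modal cases and the Box case use exactly the open/modally-open hypotheses that have been telegraphed by Definition \ref{definition:modally-open} and the earlier Lemma \ref{lemma:kripke-open-truth}. The existential backward step for $\blacklozenge$ is the asymmetric one and is the only place where the stated definition does not line up directly with the induction; clarifying whether modal openness includes its left/right dual (as the double-categorical reformulation the author alludes to would suggest) is the crux.
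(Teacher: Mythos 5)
The paper states this lemma without proof, so there is nothing to compare against directly; your induction is the standard (and surely intended) argument, and the atomic, propositional, implication, and $\Box$ cases are all handled correctly. (One small label swap: in the implication case it is the direction $\Model{M}\Rightarrow\Model{N}$ that needs openness of $f$ to pull a hypothesis $f(w)\sqsubseteq v'$ back to $W$, and the direction $\Model{N}\Rightarrow\Model{M}$ that pushes $w\sqsubseteq w''$ forward by monotonicity; you have these the other way around, but both ingredients are present.)

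Your diagnosis of the backward $\blacklozenge$ step is correct, and the difficulty is not an artifact of your proof strategy: the lemma fails as stated if \cref{definition:modally-open} is read literally. Take $W' = \{b, c\}$ and $W = \{a_1, a_2, d\}$, both discretely ordered, with $R' = \{(c,b)\}$, $R = \{(d, a_1)\}$, and $f(a_1) = f(a_2) = b$, $f(d) = c$. Then $f$ is a bimodule morphism, open (trivially, since the orders are discrete), modally open (the only instance to check is $f(d) = c \mathbin{R'} b$, witnessed by $d \mathbin{R} a_1$ with $f(a_1) = b$), and even surjective. With $V'(p) = \{c\}$ and $V(p) = f^{-1}(V'(p)) = \{d\}$ we get $\KSat[\Model{N}]{f(a_2)}{\blacklozenge p}$ (via the witness $c \mathbin{R'} b$) but not $\KSat[\Model{M}]{a_2}{\blacklozenge p}$, since nothing is $R$-related to $a_2$. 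Algebraically, the backward $\blacklozenge$ case needs $\Pre{f}\blacklozenge_{R'} \subseteq \blacklozenge_R\Pre{f}$, whereas \cref{definition:modally-open} only delivers $\Box_R\Pre{f} = \Pre{f}\Box_{R'}$ and its mate $\LKan{f}\blacklozenge_R = \blacklozenge_{R'}\LKan{f}$ (\cref{lemma:kripke-modally-open-iff}); these govern $\Box$ along $\Pre{f}$ and $\blacklozenge$ along $\LKan{f}$, but not $\blacklozenge$ along $\Pre{f}$. Your proposed fix (i) is the right one: adding the dual lifting property that $v' \mathbin{R'} f(w)$ yields some $v \in W$ with $v \mathbin{R} w$ and $v' \sqsubseteq f(v)$ closes the case exactly as you describe, via monotonicity of truth along $v' \sqsubseteq f(v)$ and the inductive hypothesis. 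In short: your proof is the right one, and the crux you isolate is a genuine gap in the statement and definitions rather than in your argument.
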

\begin{lemma}
  Let $f : W \to W'$ be open, modally open, and surjective. If
  $\KVal{W}{\varphi}$ then~$\KVal{W'}{\varphi}$.
\end{lemma}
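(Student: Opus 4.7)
The plan is to reduce the claim to \cref{lemma:kripke-open-modal-truth} by a transfer-of-valuations argument, exactly mirroring how Lemma \ref{lemma:kripke-open-surj-truth} would be derived from \cref{lemma:kripke-open-truth} in the non-modal case. Suppose $\KVal{W}{\varphi}$, and let $V'$ be any valuation on $W'$ and $w' \in W'$ any world; I need to conclude $\KSat[(W', \sqsubseteq, R', V')]{w'}{\varphi}$.

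First I would pull $V'$ back along $f$ to obtain a valuation $V \defeq f^{-1} \circ V'$ on $W$. Each $V(p) = f^{-1}(V'(p))$ is an upper set, since $f$ is monotonic and $V'(p)$ is upper, so $V$ is a legitimate valuation. This produces modal Kripke models $\Model{M} = (W, \sqsubseteq, R, V)$ and $\Model{N} = (W', \sqsubseteq, R', V')$ whose valuations satisfy the hypothesis $V = f^{-1} \circ V'$ of \cref{lemma:kripke-open-modal-truth}.

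Next I would invoke surjectivity of $f$ to pick some $w \in W$ with $f(w) = w'$. The assumption $\KVal{W}{\varphi}$ says that $\KSat[(W, \sqsubseteq, R, U)]{v}{\varphi}$ holds for \emph{every} valuation $U$ and every $v \in W$; specialising to $U = V$ and $v = w$ yields $\KSat[\Model{M}]{w}{\varphi}$. Since $f$ is both open and modally open, \cref{lemma:kripke-open-modal-truth} applies and gives $\KSat[\Model{N}]{f(w)}{\varphi}$, i.e. $\KSat[\Model{N}]{w'}{\varphi}$. As $V'$ and $w'$ were arbitrary, $\KVal{W'}{\varphi}$.

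There is no real obstacle here: the lemma is essentially a corollary of \cref{lemma:kripke-open-modal-truth}, with surjectivity serving only to lift an arbitrary target world to a source witness. All the substantive work—checking that the $\blacklozenge$ and $\Box$ clauses transfer across $f$, which is where the \emph{modally open} condition does its job (contravariant witness-finding for $\Box$, covariant witness-finding for $\blacklozenge$)—has already been discharged in proving that lemma.
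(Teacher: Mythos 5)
Your proof is correct and matches the paper's intent: the paper states this lemma without proof, positioned as an immediate corollary of \cref{lemma:kripke-open-modal-truth} in exact analogy with how \cref{lemma:kripke-open-surj-truth} follows from \cref{lemma:kripke-open-truth}, and your transfer-of-valuations argument (pull back $V'$ along $f$, lift $w'$ to a preimage via surjectivity, apply the truth-preservation lemma) is precisely that routine derivation.
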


The following result relates the modal openness of $f$ to $\Pre{f}$.
\begin{lemma}
  \label{lemma:kripke-modally-open-iff}
  $f : R \to R'$ is modally open iff $\Box_R \Pre{f} = \Pre{f}
  \Box_{R'}$ iff $\LKan{f} \blacklozenge_R = \blacklozenge_{R'} \LKan{f}$.
\end{lemma}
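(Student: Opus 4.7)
The plan is to establish the first biconditional (modal openness $\iff$ the $\Box$-equation) by direct unfolding in terms of upper sets, and then derive the second biconditional (the $\Box$-equation $\iff$ the $\blacklozenge$-equation) formally from adjointness.

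First I unfold the relevant operators on an arbitrary upper set $S \subseteq W'$:
\[
  \Pre{f}(\Box_{R'}(S)) \;=\; \SetComp{w \in W}{\forall u.\ f(w) \mathbin{R'} u \Rightarrow u \in S}
\]
and
\[
  \Box_R(\Pre{f}(S)) \;=\; \SetComp{w \in W}{\forall v.\ w \mathbin{R} v \Rightarrow f(v) \in S}.
\]
Since $f$ is assumed to be a bimodule morphism, \cref{lemma:morphism-bimod-iff-inclusion} already gives $\Pre{f}\Box_{R'} \subseteq \Box_R \Pre{f}$. Equality therefore amounts to the reverse inclusion $\Box_R \Pre{f} \subseteq \Pre{f}\Box_{R'}$.

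For the forward direction ($\Rightarrow$), I would assume $f$ is modally open and take $w \in \Box_R\Pre{f}(S)$. Given any $u$ with $f(w) \mathbin{R'} u$, modal openness produces $w' \in W$ with $w \mathbin{R} w'$ and $f(w') \sqsubseteq u$; then $f(w') \in S$ by hypothesis on $w$, and since $S$ is an upper set we get $u \in S$, as desired. For the converse ($\Leftarrow$), the key step—and the only step that requires any cleverness—is to pick a revealing witness upper set. Given $w \in W$ and $u \in W'$ with $f(w) \mathbin{R'} u$, I would set
\[
  S_w \;\defeq\; \SetComp{u' \in W'}{\exists w' \in W.\ w \mathbin{R} w' \text{ and } f(w') \sqsubseteq u'},
\]
which is manifestly an upper set in $W'$. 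By construction $w \in \Box_R \Pre{f}(S_w)$ (take $w' = v$ for each $v$ with $w \mathbin{R} v$), so by the assumed inclusion $w \in \Pre{f}\Box_{R'}(S_w)$. Applying this to $u$ yields $u \in S_w$, which is exactly the witness $w'$ required by \cref{definition:modally-open}.

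The second biconditional is then a consequence of the mate correspondence. The maps $\Box_R$, $\Box_{R'}$, and $\Pre{f}$ are all meet-preserving between complete lattices, hence by the adjoint functor theorem each has a left adjoint, namely $\blacklozenge_R$, $\blacklozenge_{R'}$, and $\LKan{f}$ respectively. Taking left adjoints of both sides of the equality $\Box_R \Pre{f} = \Pre{f}\Box_{R'}$ and using that the left adjoint of a composite is the composite of left adjoints in reverse order, we obtain $\LKan{f} \blacklozenge_R = \blacklozenge_{R'} \LKan{f}$; the converse is identical, as left and right adjoint assignments are mutually inverse on the relevant posets of meet-preserving (resp.\ join-preserving) maps between complete lattices. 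The main obstacle is purely finding the correct witness $S_w$ for the reverse of the first biconditional; once located, everything else is routine unfolding or a formal adjointness argument.
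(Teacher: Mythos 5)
Your proof is correct and complete. The paper itself states this lemma without proof, so there is nothing to compare against line by line, but your argument is the evident intended one: the inclusion $\Pre{f}\Box_{R'} \subseteq \Box_R \Pre{f}$ comes for free from \cref{lemma:morphism-bimod-iff-inclusion}, the reverse inclusion is equivalent to modal openness via the witness upper set $S_w$ (which is exactly the right choice --- it is upward closed, contains $f(v)$ for every $v$ with $w \mathbin{R} v$, and membership of $u$ in it is literally the conclusion of \cref{definition:modally-open}), and the second biconditional follows from uniqueness of adjoints and the fact that the left adjoint of a composite of right adjoints between complete lattices is the reversed composite of their left adjoints. All steps check out, including the use of upward closure of $S$ in the forward direction.
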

Thus, the duality \eqref{equation:bimod-primealglatto} may be restricted to
dualities between wide subcategories:
\begin{align}
  \label{equation:bimodopen-primealglattoo}
  \EBIMOD^\text{op}_\text{moo} \Equiv \PRIMEALGLATTO_{\Rightarrow{}o}
  & \quad &
  \EBIMOD^\text{op}_\text{moo, surj} \Equiv \PRIMEALGLATTO_{\Rightarrow{}o, \text{inj}}
\end{align}
The morphisms to the left of $\Equiv$ are open and modally open (resp. and
surjective); and the to the right of it preserve exponentials and commute with
operators (resp. and are injective).

Let us consider the restriction of this duality to the classical setting---as a
sanity check. A bimodule on a discrete poset is just a relation on a set. The
corresponding restriction on the right is to CABAs with operators, and complete
homomorphisms which commute with operators. We thus obtain the \emph{Thomason
duality} $\textbf{MFrm}^\text{op}_\text{open} \Equiv \textbf{CABAO}$ between
Kripke frames and modally open maps on the left, and CABAs with operators to the
right \cite{thomason_1975,kishida_2018}.

\subsection{Related work}


Many works have presented a Kripke semantics for intuitionistic modal logic. All
such semantics assume two accessibility relations: a preorder for the
intuitionistic dimension, and a second relation for the modal dimension. What
varies is their \emph{compatibility conditions}.

The first work to present such a semantics appears to be that of Fischer Servi
\cite{fischer_servi_1980}. One of the required compatibility conditions is
$(\sqsubseteq) \circ R \subseteq R \circ (\sqsubseteq)$. This is weaker than
having a bimodule, but sufficient to prove soundness.

The first work to recognise the importance of bimodules was Sotirov's 1979
thesis. His results are summarised in a conference abstract \cite[\S
4]{sotirov_1984}: they include the completeness of a minimal intuitionistic
modal logic with a $\Box$, the $\text{K}$ axiom, and the necessitation rule.
Bo\v{z}i\'{c} and Do\v{s}en \cite{bozic_1984} repeat the study for the same
logic, but for a semantics based on the Fischer Servi compatibility conditions.
However, they note that their completeness proof actually constructs half a
bimodule (a `condensed' relation). They also point out that bimodules, which
they call `strictly condensed' relations, are sound and complete for their
logic. Wolter and Zakharyaschev \cite[\S 2]{wolter_1997} argue that bimodule and
Fischer Servi semantics are equi-expressive.

Plotkin and Stirling \cite{plotkin_1986} attempt to systematise the Kripke
semantics of intuitionistic modal logic. Their frame conditions allow
`transporting a modal relation upwards' along any potential increases of
information on either side. This paper and all its descendants---notably the
thesis of Simpson \cite[\S 3.3]{simpson_1994}---adopt a different satisfaction
clause for $\Box$ which uses both $\sqsubseteq$ and $R$. In the presence of the
bimodule conditions this satisfaction clause is equivalent to the classical one,
which I use here.


The bimodule condition and the \emph{complex algebra} construction (or fragments
thereof) have made scattered appearances in the literature: in the early work of
Sotirov \cite{sotirov_1984} and Bo\v{z}i\'{c} and Do\v{s}en \cite{bozic_1984};
in the work of Wolter and Zakharyaschev
\cite{wolter_1999,wolter_1997,wolter_1999b}, Hasimoto \cite[\S
4]{hasimoto_2001}, and Or\l{}owska and Rewitzky \cite{orlowska_2007}; and of
course in Dzik et al. \cite[\S 7]{dzik_2010}. With the exception of the last
one, none of these references discuss the $\blacklozenge$ modality. Moreover, in
none of these references are the categorical aspects of this construction
discussed. 


As mentioned before, dualities between frames and algebras have played a
significant role in modal logic.
Thomason \cite{thomason_1975} and Goldblatt \cite{goldblatt_1974} also
considered morphisms of frames, respectively obtaining \emph{Thomason duality}
and (categorical) \emph{J\'{o}nsson-Tarski duality} between descriptive frames
and Boolean Algebras with Operators (BAOs) \cite[\S 6.5]{goldblatt_2006}.
Kishida \cite{kishida_2018} surveys a number of dualities for classical modal
logic.

The duality 
\eqref{equation:bimod-primealglatto} is stated by Gehrke \cite[Thm.
2.5]{gehrke_2016} who attributes it to J\'{o}nnson \cite{jonnson_1952}, even
though no such theorem appears in that paper.

The dualities of 
\eqref{equation:bimodopen-primealglattoo} are
the direct intuitionistic analogues to that of Thomason. I have not been able to
find them anywhere in the literature.


According to the extensive survey of Menni and Smith \cite{menni_2014}, the idea
that the commonly-used modalities $\Box$ and $\lozenge$ are often part of
adjunctions $\blacklozenge \Adjoint \Box$ and $\lozenge \Adjoint \blacksquare$
is implicitly present throughout the development of modal logic. However, these
were not made explicit in a logic until the 2010s, when they appeared in the
work of Dzik et al. \cite{dzik_2010} and Sadrzadeh and Dyckhoff
\cite{sadrzadeh_2010}. The same perspective plays a central r\^{o}le in the
exposition of Kishida \cite{kishida_2018}. 

The $\blacklozenge$ modality has appeared before in \emph{tense logics} as a
`past' modality \cite{ewald_1986,goranko_2023}.

\section{Intuitionistic Logic II}
  \label{section:intuitionistic-logic-2}

In the rest of this paper we will \emph{categorify} \cite{baez_1998} the notion
of Kripke semantics. The main idea is to replace posets by categories, so that
the order $w \sqsubseteq v$ is replaced by a morphism $w \to v$. As there might
be multiple morphisms $w \to v$, this allows the recording of not just the fact
$v$ may signify more information than $w$, but also the \emph{manner} in which
it does so. The reflexivity and transitivity of the poset are then replaced by
the identity and composition laws of the category. This adds a dimension of
\emph{proof-relevance} to Kripke semantics.

A corresponding change in our algebraic viewpoint will be that of replacing the
set $\TV$ of truth values with the category $\SET$. This is a classic Lawverean
move \cite{lawvere_1973}. Notice that this is lopsided, as is usual in
intuitionistic logic: while the falsity $0$ is only represented by one value,
viz. the empty set, the truth $1$ can be represented by any non-empty set $X$.
The elements of $X$ can be thought of as a \emph{proofs} of a true statement.

Let us then trade the frame $(W, \sqsubseteq)$ for an arbitrary category $\CC$.
It remains to define what it means to have a \emph{proof} that the formula
$\varphi$ holds at a world $w \in \CC$. We denote the set of all such proofs by
$\sem{\varphi}_w$. Assuming we are given a set $\sem{p}_w$ for each proposition
$p$ and world $w$, here is a first attempt:
\begin{gather*}
  \begin{align*}
    \sem{\bot}_w &\defeq \emptyset &
    \sem{\top}_w &\defeq \{ \ast \} &
    \sem{\varphi \land \psi}_w &\defeq \sem{\varphi}_w \times \sem{\psi}_w &
    \sem{\varphi \lor \psi}_w &\defeq \sem{\varphi}_w + \sem{\psi}_w
  \end{align*} \\
  \sem{\varphi \to \psi}_w \defeq 
    \DepFun{v}{\CC}{\Hom[\CC]{w}{v} \to \sem{\varphi}_v \to \sem{\psi}_v}
\end{gather*}
where for a family of sets $\DelimPrn{B_a}_{a \in A}$ we let
\[
  \DepFun{a}{A}{B_a}  \defeq
  \SetComp{ f : A \to \bigcup_{a \in A} B_a \DelimMin{3} }
            { \forall a \in A. f(a) \in B_a }
\]
This closely follows the usual Kripke semantics, but adds proofs. For example, a
proof in $\sem{\varphi_1 \land \varphi_2}_w$ is a pair $(x, y)$ of a proof $x
\in \sem{\varphi_1}_w$ and a proof $y \in \sem{\varphi_2}_w$. Similarly, a proof
$F \in \sem{\varphi \to \psi}_w$ is a function which maps a proof of `increase
in information' $f : w \to v$ to a function $F(v)(f) : \sem{\varphi}_v \to
\sem{\psi}_v$. In turn, this function maps proofs in $\sem{\varphi}_v$ to proofs
in $\sem{\psi}_v$.

To show that this definition is monotonic we have to demonstrate it on proofs:
given a proof $x \in \sem{\varphi}_w$ and a morphism $f : w \to v$ we have to
define a proof $f \cdot x \in \sem{\varphi}_v$. Assuming that we are given this
operation for propositions, we can extend it by induction; e.g.
\[
  \begin{array}{lllll}
    f \cdot (x, y)
        &\defeq& (f \cdot x, f \cdot y) 
        &\in \sem{\varphi \land \psi}_v \\
    f \cdot F
        &\defeq& (z : \CC) \mapsto
                  (g : \Hom[\CC]{v}{z}) \mapsto
                  (x : \sem{\varphi}_z) \mapsto F(z)(g \circ f)(x)
        &\in \sem{\varphi \to \psi}_v
  \end{array}
\]
Moreover, this definition is compatible with $\CC$, in the sense that $g \cdot
(f \cdot x) = (g \circ f) \cdot x$ and $\IdArr{w} \cdot x = x$. We thus obtain a
(covariant) \emph{presheaf}
$
  \sem{\varphi} : \CC \fto \SET
$
for each formula $\varphi$.

It is well-known that the proofs of intuitionistic logic form a
\emph{bicartesian closed category} (biCCC), i.e. a category with finite
(co)products and exponentials \cite{lambek_1980}. A biCCC can be seen as a
categorification of a Heyting algebra: formulae are objects of the category, and
proofs are morphisms. We will not expound on this further; see
\cite{lambek_1988,crole_1993,awodey_2010}.

It should therefore be the case that the semantics described above form a biCCC.
Indeed, it is a well-known fact of topos theory that the \emph{category of
presheaves} $\FUNC{\CC}{\SET}$ is a biCCC. In fact, the construction of
exponentials \cite[\S I.6]{mac_lane_1994} reveals that our definition above is
deficient: we should restrict $\sem{\varphi \to \psi}_w$ to contain only those
functions $F$ that satisfy a \emph{naturality condition}, i.e. those which for
any $f : w \to v_1$, $g : v_1 \to v_2$, and $x \in \sem{\varphi}_{v_1}$ satisfy
\[
  g \cdot F(v_1)(f)(x) = F(v_2)(g \circ f)(g \cdot x)
\]

From this point onwards I will identify two-dimensional Kripke semantics with
categorical semantics in a category of presheaves $\FUNC{\CC}{\SET}$.

\subsection{Presheaf categories}

The category $\FUNC{\CC}{\SET}$ of covariant presheaves is eerily similar to
prime algebraic lattices. In a sense they are just the same; but, having traded
$\TV$ for $\SET$, they have become proof-relevant.

First, letting $P \in \FUNC{\CC}{\SET}$, an element $x \in P(w)$ is a proof that
$P$ holds at a `world' $w \in \CC$. A morphism $f : w \to v$ of $\CC$ then leads
to a proof $f \cdot x \defeq P(f)(x) \in P(v)$ that $P$ holds at $v$. Thus, the
presheaf $P$ is very much like an upper set.

Second, the \emph{representable presheaves} $\Yo{w} \defeq \Hom[\CC]{w}{-} : \CC
\to \SET$ are the proof-relevant analogues of the principal upper set. By the
Yoneda lemma they constitute an \emph{embedding}
\[
  \Yo : \Op{\CC} \fto \FUNC{\CC}{\SET}
\]
which moreover preserves limits and exponentials \cite{awodey_2010}.

Third, the category $\FUNC{\CC}{\SET}$ is both complete and cocomplete, with
limits and colimits computed pointwise \cite[\S I]{mac_lane_1994}. It is also
`distributive' in an appropriate sense \cite[\S 3.3]{anel_2021}, which makes it
into a \emph{Grothendieck topos}. It is thus a cartesian closed category, with
exponential
\[
  (\HeyExp{P}{Q})(w) \defeq \Hom{P \times \Yo{w}}{Q}
\]
which is essentially the two-dimensional semantics of implication I gave above.

Fourth, the representables $\Yo{w}$ are special, in that
they are \emph{tiny} \cite{yetter_1987}.
\begin{definition}
  An object $d \in \DD$ is \emph{tiny} just if $\Hom{d}{-} : \DD \to
  \SET$ preserves colimits.\footnote{In the literature this property is often
  referred to as \emph{external tininess} (cf.~internal tininess).}
\end{definition}
Tininess is a proof-relevant version of primality: it implies that for any $f :
w \to \Colim[i] v_i$ there exists an $i$ such that $f$ is equal to the
composition of a morphism $w \to v_i$ with the injection $v_i \to \Colim[i]
v_i$. By the Yoneda lemma it follows that all representables $\Yo{w}$ are tiny,
as they satisfy the above definition for $\DD \defeq \FUNC{\CC}{\SET}$ and $d
\defeq \Yo{w}$.

Fifth, the so-called \emph{co-Yoneda lemma} \cite[\S III.7]{mac_lane_1978}
shows that every $P \in \FUNC{\CC}{\SET}$ is a colimit of representables. This
means that it can be reconstructed by sticking together tiny elements:
\[
  P \cong \Colim[(w, x) \in \PSHEL{P}] \Yo{w}
\]
Like with prime algebraic lattices, there is a converse to this result: every
category which is generated by sticking together tiny elements is in fact a
presheaf category:
\begin{theorem}[Bunge \cite{bunge_1966}]
  \label{theorem:bunge}
  A category which is cocomplete and strongly generated by a small set of tiny
  objects is equivalent to $\FUNC{\CC}{\SET}$ for some small category $\CC$.
\end{theorem}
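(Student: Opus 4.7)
The plan is to extract $\CC$ from the given tiny generators and realise $\DD$ as presheaves on its opposite. Let $\mathbf{G}$ be the full subcategory of $\DD$ spanned by the small strongly generating set of tiny objects, and set $\CC \defeq \Op{\mathbf{G}}$. Define $F : \DD \to \FUNC{\CC}{\SET}$ by $F(d)(c) \defeq \Hom[\DD]{c}{d}$, functorial in $c \in \CC$ by precomposition; this is the restriction of the contravariant Yoneda embedding of $\DD$ along $\CC \hookrightarrow \Op{\DD}$. A direct check shows that $F$ carries the inclusion $\iota : \mathbf{G} \hookrightarrow \DD$ to the Yoneda embedding $\Yo : \Op{\CC} \to \FUNC{\CC}{\SET}$, i.e.\ $F \circ \iota \cong \Yo$.

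Next, I build a left adjoint $L : \FUNC{\CC}{\SET} \to \DD$ using the cocompleteness of $\DD$. Set
\[
  L(P) \defeq \Colim[(c, x) \in \PSHEL{P}] \iota(c),
\]
which is well-defined because $\PSHEL{P}$ is small and $\DD$ is cocomplete. By construction $L$ is the left Kan extension of $\iota$ along $\Yo$, and in particular $L\Yo(c) \cong \iota(c) = c$. The adjunction $L \Adjoint F$ then follows formally from the co-Yoneda presentation $P \cong \Colim[(c,x) \in \PSHEL{P}] \Yo(c)$ together with the Yoneda lemma, via continuity of $\Hom[\DD]{-}{d}$ in its first argument.

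It remains to show that the unit and counit of $L \Adjoint F$ are isomorphisms. Crucially, $F$ preserves \emph{all} colimits: each $\Hom[\DD]{c}{-}$ with $c \in \mathbf{G}$ is cocontinuous by tininess, and colimits in $\FUNC{\CC}{\SET}$ are computed pointwise. For the counit, unfolding gives
\[
  LF(d) \cong \Colim[(c,\, x : c \to d) \in \mathbf{G} \downarrow d] c,
\]
and $\epsilon_d : LF(d) \to d$ is the canonical cocone assembling the $x$'s. That $\epsilon_d$ is an isomorphism is precisely the statement that $\mathbf{G}$ is \emph{dense} in $\DD$. I expect this to be the main obstacle of the proof: one must verify that strong generation together with tininess of the objects of $\mathbf{G}$ implies density. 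The argument is that strong generation already makes the cocone a strong epimorphism, while tininess of the generators prevents nontrivial identifications by the comparison map, so the cocone is also monic---hence invertible---in $\DD$.

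For the unit $\eta_P : P \to FL(P)$, both $1_{\FUNC{\CC}{\SET}}$ and $FL$ preserve colimits ($L$ as a left adjoint and $F$ by the argument above), so it suffices to verify that $\eta$ is an isomorphism at each representable. But $FL\Yo(c) \cong F(c) \cong \Yo(c)$ by the earlier computations, and the triangle identity pins $\eta_{\Yo(c)}$ to this isomorphism. Since every presheaf is a colimit of representables by the co-Yoneda lemma, $\eta$ is an isomorphism throughout. Consequently $F$ is an equivalence with quasi-inverse $L$, establishing $\DD \Equiv \FUNC{\CC}{\SET}$ for the small category $\CC = \Op{\mathbf{G}}$.
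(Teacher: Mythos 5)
The paper does not actually prove this theorem---it defers to Bunge and to Kelly \cite[\S 5.5]{kelly_2005}---so your proposal has to be measured against the standard argument (Kelly, Thm.~5.26), whose skeleton you have correctly reproduced: take $\mathbf{G}$ to be the full subcategory on the tiny strong generators, form the nerve $F(d) = \Hom[\DD]{\iota(-)}{d}$ valued in $\FUNC{\Op{\mathbf{G}}}{\SET}$, build its left adjoint $L$ by cocompleteness, observe that tininess of the generators (plus pointwise colimits of presheaves) makes $F$ cocontinuous, and conclude that the unit is invertible because $1$ and $FL$ are both cocontinuous and agree on representables. All of that is sound, and the conventions line up with the paper's covariant presheaves.

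The gap is exactly where you flag it: the counit. You correctly reduce it to density of $\mathbf{G}$, note that strong generation makes $\epsilon_d : LF(d) \to d$ a strong epimorphism, and then assert that ``tininess prevents nontrivial identifications, so the cocone is also monic.'' That is not an argument: tininess gives no direct purchase on $\epsilon_d$ being monic, and density genuinely does not follow from strong generation without routing through the adjunction. The standard repair uses the half of the proof you have already written, just in the other order. Once $\eta$ is known to be an isomorphism, the triangle identity $F\epsilon \circ \eta F = 1_F$ forces each $F\epsilon_d$ to be invertible; and strong generation is precisely the statement that the nerve $F$ is conservative (and faithful), so $\epsilon_d$ is itself invertible. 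If you prefer your ``strong epi and mono'' phrasing: $F\epsilon_d$ iso together with faithfulness of $F$ gives that $\epsilon_d$ is monic, and a monic strong epimorphism is an isomorphism---but the monicity comes from the unit computation and the triangle identity, not from tininess acting on the colimit. With the counit step moved after the unit and argued this way, the proof closes.
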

A textbook presentation of this result can be found in the book by Kelly
\cite[\S 5.5]{kelly_2005}.

Finally, the fact every element can be reconstructed as a colimit of
representables means that it is possible to uniquely extend any functor $f : \CC
\fto \DD$ to a cocontinuous functor $\FUNC{\Op{\CC}}{\TV} \fto \DD$, as long as
$\DD$ is cocomplete. Diagrammatically, in the situation
\begin{equation}
  \label{diagram:left-kan-extension}
  \begin{tikzpicture}[node distance=2.5cm, on grid, baseline=(current  bounding  box.center)]
    \node (C) {$\CC$};
    \node (PSHC) [right = 4cm of C] {$\FUNC{\Op{\CC}}{\SET}$};
    \node (E) [below = 2cm of PSHC] {$\DD$};
    \path[->] (C) edge node [above] {$\Yo$} (PSHC);
    \path[->] (C) edge node [below = .5em] {$f$} (E);
    \path[->, dashed] (PSHC) 
      edge 
        node [left] {$\LKan{f}$} 
        node [right = .15em] {$\Adjoint$}
      (E);
    \path[->, bend right = 45, dotted] (E) edge node [right] {$\NerveS{f}$} (PSHC);
  \end{tikzpicture}
\end{equation}
there exists an essentially unique cocontinuous $\LKan{f}$ with
$\LKan{f}\prn{\Yo{w}} = f(w)$. It is given by
\[
  \LKan{f}\prn{\Colim[(w, x) \in \PSHEL{P}] \Yo{w}} 
    \defeq \Colim[(w, x) \in \PSHEL{P}] f(w)
\]
$\LKan{f}$ is called the \emph{left Kan extension} of $f$ along $\Yo$. It has a
right adjoint $\NerveS{f}$ which is explicitly given by $\NerveS{f}(d) \defeq
\Hom{f(-)}{d}$. This amounts to an isomorphism
\[
  \Hom[\CAT]{\CC}{\DD} \cong \Hom[\COCONT]{\FUNC{\Op{\CC}}{\SET}}{\DD}
\]
where $\CAT$ is the category of categories, and $\COCONT$ is the category of
cocomplete categories and cocontinuous functors: see
\cite[Prop.~9.16]{awodey_2010} \cite[Cor.~6.2.6, Rem.~6.5.9]{riehl_2016} and
\cite[\S~X.3, Cor.~2]{mac_lane_1978} \cite[Th. 4.51]{kelly_2005}.

\begin{table}
  \caption{Categorification of Kripke semantics}
  \centering
  \label{table:categorification}
  \medskip
  \begin{tabular}{c|c}
    poset                   & category \\
    monotonic map           & functor \\
    upper sets              & presheaves \\
    principal upper set     & representable presheaf \\
    prime element           & tiny object \\
    prime algebraic lattice & presheaf category \\
    bimodule                & profunctor
\end{tabular}
\end{table}

All in all, presheaf categories are the categorification of prime algebraic
lattices.

\subsection{Cauchy-complete and spacelike categories}

Replacing posets with categories does not come for free: the extra dimension of
morphisms leads to situations that have no analogues in poset. Some of these are
problematic when thinking of $\CC$ as a two-dimensional Kripke frame. Perhaps
the most bizarre is the presence of \emph{idempotents}, i.e. morphisms $e : w
\to w$ with the property that $e \circ e = e$. Such morphisms represent a
non-trivial increase in information which confusingly leaves us in the same
world.

The presence of idempotents causes issues. For example, recall that, in prime
algebraic lattices, primes and principal upper sets coincide. The astute reader
will have noticed we did \emph{not} claim the analogous result in presheaf
categories: tiny objects are not necessarily representable in
$\FUNC{\CC}{\SET}$. For that, we need $\CC$ to be \emph{Cauchy-complete}
\cite{borceux_1986,borceux_1994a}.
\begin{definition}
  A category is \emph{Cauchy-complete} just if every idempotent splits, i.e. if
  every idempotent is equal to $s \circ r$ for a section-retraction pair $s$ and
  $r$.
\end{definition}
Note that every complete category is Cauchy-complete, including $\SET$ and
$\FUNC{\CC}{\SET}$.

This leads us to another troublesome situation, namely that of having
section-retraction pairs, i.e. $s : w \to v$ and $r : v \to w$ with $r \circ s =
\IdArr{w}$. In this case $w$ and $v$ contain no more information than each
other, but are not isomorphic. We may ask that this does not arise.
\begin{definition}
  A category satisfies the \emph{Hemelaer condition} \cite[Prop.
  5.8]{hemelaer_2022} just if every section-retraction pair is an isomorphism.
\end{definition}
Combining these two conditions is equivalent to the following definition.
\begin{definition}
  A category is \emph{spacelike} if every idempotent is an identity.
\end{definition}
%

Lawvere has identified this condition as having particular importance in
recognising petit toposes \cite{mclarty_2006}. We will not use it much, as it
restricts the dualities we wish to develop.

In the rest of this paper we will assume that our base categories $\CC$ are
Cauchy-complete, so that tiny objects coincide with representables.

%
%
%

\subsection{Morphisms}

The simplest kind of morphism between categories is a functor. Given a $f : \CC
\fto \DD$ we can define a functor $\Pre{f} : \FUNC{\DD}{\SET} \fto
\FUNC{\CC}{\SET}$ that takes $P : \DD \fto \SET$ to $P \circ f : \CC \fto \SET$.
This functor has left and right adjoints, which are given by Kan extension
\cite[A4.1.4]{johnstone_2002a}:
\begin{diagram}
  \label{diagram:essential-presheaf}
  \begin{tikzpicture}[node distance=2.5cm, on grid]
    \node (W) {$\FUNC{\CC}{\SET}$};
    \node (V) [right = 4cm of W] {$\FUNC{\DD}{\SET}$};
    \path[->] (V) edge node [above, near start] {$\Pre{f}$} (W);
    \path[->, bend left=30, dashed] (W) edge node [above] {$\RKan{f}$} (V);
    \path[->, bend right=30, dashed] (W) edge node [below] {$\LKan{f}$} (V);
    \node (Adj1) [above right = 0.35cm and 2 cm of W] {$\AdjointLDown$};
    \node (Adj2) [below right = 0.35cm and 2 cm of W] {$\AdjointLDown$};
  \end{tikzpicture}
\end{diagram}
Therefore $\Pre{f}$ preserves all limits and colimits, i.e. it is
(co)continuous. In short, the presheaf construction gives a functor
$\FUNC{-}{\SET} : \Op{\CATCC} \fto \PSHCAT$, where $\CATCC$ is the category of
small Cauchy-complete categories and functors; and $\PSHCAT$ is the category of
presheaf categories (over Cauchy-complete base categories) and (co)continuous
functors.

Moreover, this functor is an equivalence. Given a presheaf category we can
obtain its base as the subcategory of tiny objects
\cite[A1.1.10]{johnstone_2002a}. But how can we extract $f : \CC \fto \DD$ from
any (co)continuous functor $\Pre{f} : \FUNC{\DD}{\SET} \fto \FUNC{\CC}{\SET}$?
First, as presheaf categories are locally presentable, the adjoint functor
theorem implies that $\Pre{f}$ has left and right adjoints, as in
\eqref{diagram:essential-presheaf} \cite[\S 1.66]{adamek_1994}. This gives what
topos theorists call an \emph{essential geometric morphism}. Johnstone \cite[\S
A4.1.5]{johnstone_2002a} shows that every such morphism is induced by a $f : \CC
\fto \DD$, as $\LKan{f}$ preserves representables (when $\DD$ is
Cauchy-complete). We thus obtain a duality
\begin{equation}
  \label{duality:catcc-psh}
  \textbf{Cat}^\text{op}_\text{cc} \Equiv \PSHCAT
\end{equation}

As with posets, functors here fail to preserve truth; for that we need a notion
of openness.
\begin{definition}
  \label{definition:open-functor}
  $f : \CC \fto \DD$ is \emph{open} just if $\Pre{f} : \FUNC{\DD}{\SET} \fto
  \FUNC{\CC}{\SET}$ preserves exponentials.
\end{definition}
\begin{lemma}
  \label{lemma:kripke-open-truth-category}
  If $f : \CC \fto \DD$ is open then there is a natural isomorphism
  $\theta_w : \sem{\varphi}_w \cong \sem{\varphi}_{f(w)}$.
\end{lemma}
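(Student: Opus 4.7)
The plan is to prove this by induction on the formula $\varphi$, by showing the stronger statement that $\sem{\varphi}^\CC \cong \Pre{f}(\sem{\varphi}^\DD)$ as presheaves on $\CC$, where the superscripts indicate the base category and the valuations are linked by $\sem{p}^\CC = \Pre{f}(\sem{p}^\DD)$. The target isomorphism $\theta_w$ is then obtained by evaluating at $w$: $\sem{\varphi}^\CC_w \cong \Pre{f}(\sem{\varphi}^\DD)(w) = \sem{\varphi}^\DD_{f(w)} = \sem{\varphi}_{f(w)}$, with naturality in $w$ following automatically from the fact that both sides are values of presheaves on $\CC$.

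The base case of propositional variables is immediate by hypothesis. The cases for $\top$, $\bot$, $\land$, and $\lor$ follow because $\Pre{f}$ is both a left and a right adjoint, as seen in diagram \eqref{diagram:essential-presheaf}. Consequently $\Pre{f}$ is both continuous and cocontinuous, hence preserves the terminal and initial objects used to interpret $\top$ and $\bot$, the binary products used for $\land$, and the binary coproducts used for $\lor$. In each case one simply applies $\Pre{f}$ to the interpretation of the subformula(e) and invokes the inductive hypothesis to transport the construction from $\DD$ back to $\CC$.

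The implication case is the crux, and is where openness is used in an essential way. The interpretation of $\varphi \to \psi$ is the presheaf exponential $\HeyExp{\sem{\varphi}}{\sem{\psi}}$. By \cref{definition:open-functor}, openness of $f$ says precisely that $\Pre{f}$ preserves exponentials, so we get an isomorphism
\[
  \Pre{f}\bigl(\HeyExp{\sem{\varphi}^\DD}{\sem{\psi}^\DD}\bigr)
  \;\cong\;
  \HeyExp{\Pre{f}\sem{\varphi}^\DD}{\Pre{f}\sem{\psi}^\DD},
\]
and then the inductive hypotheses $\sem{\varphi}^\CC \cong \Pre{f}\sem{\varphi}^\DD$ and $\sem{\psi}^\CC \cong \Pre{f}\sem{\psi}^\DD$ combine with functoriality of the exponential to yield $\sem{\varphi \to \psi}^\CC \cong \Pre{f}\sem{\varphi \to \psi}^\DD$. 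Evaluating at $w$ and using the definition of $\Pre{f}$ completes the induction.

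The only real obstacle is the implication case, and it is handled definitionally by openness; all other cases are routine consequences of $\Pre{f}$ being a two-sided adjoint. This is the proof-relevant analogue of \cref{lemma:kripke-open-truth}, and reveals the same structural phenomenon: openness is exactly the condition that makes the inverse image functor commute with the connective whose semantics involves universal quantification over future worlds.
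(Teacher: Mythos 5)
Your proof is correct and is exactly the argument the paper intends: an induction on $\varphi$ establishing $\sem{\varphi}^\CC \cong \Pre{f}\sem{\varphi}^\DD$, with the connective cases handled by $\Pre{f}$ being a two-sided adjoint and the implication case handled by \cref{definition:open-functor}, which defines openness precisely as preservation of exponentials. Your explicit hypothesis $\sem{p}^\CC = \Pre{f}\sem{p}^\DD$ is the right reading of the (implicit) valuation-compatibility condition, mirroring $V = f^{-1}\circ V'$ in \cref{lemma:kripke-open-truth}.
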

\cref{definition:open-functor} is somewhat underwhelming, as it does not give
explicit conditions that one can check---unlike \cref{definition:open-poset}.
However, obtaining such a description appears difficult.

Some information may be gleaned by considering $(\Pre{f}, \RKan{f}) :
\FUNC{\CC}{\SET} \fto \FUNC{\DD}{\SET}$ as a \emph{geometric morphism}. Such a
morphism is \emph{open} \cite{johnstone_1980} \cite[C3.1]{johnstone_2002b}
just if both the canonical maps $\Pre{f}(c \Rightarrow d) \to \Pre{f}(c)
\Rightarrow \Pre{f}(d)$ and $\Pre{f}(\Omega) \to \Omega$ are monic.
Johnstone \cite[C3.1]{johnstone_2002b} proves that $(\Pre{f}, \RKan{f})$ is open
iff for any $\beta : f(w) \to v'$ in $\DD$ there exists an $\alpha : w \to w'$
in $\CC$ and a section-retraction pair $s : v' \to f(w')$ and $r : f(w') \to v'$
with $s \circ \beta = f(\alpha)$. This superficially seems like a
categorification of \cref{definition:open-poset}. However, it only guarantees
that $\Pre{f}$ is \emph{sub-cartesian-closed}, whereas we need an isomorphism
for \cref{lemma:kripke-open-truth-category} to hold. 

A stronger condition is to ask that $(\Pre{f}, \RKan{f})$ be \emph{locally
connected}, i.e. that $\Pre{f}$ commute with dependent products
\cite[C3.3]{johnstone_2002b}. All such morphisms are open geometric morphisms.
This is stronger than what we need, but sufficient conditions on $f$ can be
given \cite[C3.3.8]{johnstone_2002b}.

Finally, an even stronger condition is to ask that $(\Pre{f}, \RKan{f})$ be
\emph{atomic}, i.e. that $\Pre{f}$ is a \emph{logical} functor. This means it
preserves exponentials and the subobject classifier \cite[A2.1,
C3.5]{johnstone_2002b}. All atomic geometric morphisms are locally connected.
This is again stronger than what we need, and a characterisation in terms of $f$
is elusive: see MathOverflow \cite{spivak_2016}.

It is easier to characterise when $(\Pre{f}, \RKan{f})$ is a \emph{surjective
geometric morphism}, i.e. when $\Pre{f}$ is faithful
\cite[A2.4.6]{johnstone_2002a}. This happens exactly when $f$ is
\emph{retractionally surjective}, i.e. whenever every $d \in \DD$ is the retract
of $f(c)$ for some $c \in \CC$ \cite[A2.4.7]{johnstone_2002a}. If $\DD$
satisfies the Hemelaer condition this reduces to $f$ being essentially
surjective.

Write $\KVal{\CC}{\varphi}$ to mean that $\sem{\varphi}_w$ is non-empty for
any $w \in \CC$ and any interpretation of $\sem{p}$.
\begin{lemma}
  Let $f : \CC \to \DD$ be open and retractionally surjective. If
  $\KVal{\CC}{\varphi}$~then~$\KVal{\DD}{\varphi}$.
\end{lemma}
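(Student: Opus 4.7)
The plan is to follow the poset template (\cref{lemma:kripke-open-surj-truth}), combining the openness hypothesis (which propagates non-emptiness of proofs from $\CC$ up to the image of $f$) with retractional surjectivity (which propagates it from the image to all of $\DD$).

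Fix an arbitrary interpretation $\sem{-}_\DD$ on $\DD$, and pull it back along $f$ to define $\sem{p}_\CC \defeq \Pre{f}(\sem{p}_\DD)$, i.e.~$\sem{p}_{\CC,w} \defeq \sem{p}_{\DD,f(w)}$ for $w \in \CC$. Since $f$ is open, \cref{lemma:kripke-open-truth-category} gives, by induction on $\varphi$, a natural isomorphism $\theta_w : \sem{\varphi}_{\CC,w} \cong \sem{\varphi}_{\DD,f(w)}$ for every $w \in \CC$. By the hypothesis $\KVal{\CC}{\varphi}$, applied to the chosen interpretation on $\CC$, the set $\sem{\varphi}_{\CC,w}$ is non-empty, and hence so is $\sem{\varphi}_{\DD,f(w)}$, for every $w \in \CC$.

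Now let $d \in \DD$ be arbitrary. By retractional surjectivity there exists $c \in \CC$ together with a section–retraction pair $s : d \to f(c)$, $r : f(c) \to d$ with $r \circ s = \IdArr{d}$. Since $\sem{\varphi}_\DD : \DD \to \SET$ is a (covariant) functor, applying it to $r$ yields a function $\sem{\varphi}_\DD(r) : \sem{\varphi}_{\DD,f(c)} \to \sem{\varphi}_{\DD,d}$. By the previous paragraph $\sem{\varphi}_{\DD,f(c)}$ is non-empty, so picking any $y$ in it gives $\sem{\varphi}_\DD(r)(y) \in \sem{\varphi}_{\DD,d}$. Thus $\sem{\varphi}_{\DD,d}$ is non-empty, and since $d$ and the interpretation were arbitrary we conclude $\KVal{\DD}{\varphi}$.

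There is essentially no obstacle: openness does all the real work via \cref{lemma:kripke-open-truth-category}, while retractional surjectivity is used only through the trivial observation that covariant functors send retracts to retracts, hence preserve non-emptiness. The only mild subtlety worth flagging is that one must use the \emph{retraction} $r$ (not the section $s$) to transport an inhabitant, which works because $\sem{\varphi}$ is covariant; this is the proof-relevant remnant of the monotonicity argument used in the poset case.
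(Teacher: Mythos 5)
Your proof is correct and follows the argument the paper evidently intends (it states this lemma without proof, as the categorification of \cref{lemma:kripke-open-surj-truth}): openness transports inhabitation of $\sem{\varphi}$ along $f$ via \cref{lemma:kripke-open-truth-category} applied to the pulled-back interpretation, and retractional surjectivity covers the remaining objects of $\DD$ by pushing an inhabitant forward along the retraction $r : f(c) \to d$ using the covariant functoriality of $\sem{\varphi}$. Your remark that one must transport along $r$ rather than $s$ is exactly the right proof-relevant analogue of the monotonicity step in the poset case.
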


We may thus restrict the duality \eqref{duality:catcc-psh} to dualities
\begin{align}
  \label{duality:catopen-pshsub}
  \textbf{Cat}^\text{op}_\text{cc, open} \Equiv \PSHCAT_{\Rightarrow}
  & \quad &
  \textbf{Cat}^\text{op}_\text{cc, open, rs} \Equiv \PSHCAT_{\Rightarrow, \text{f}}
\end{align}
In the first instance the category to the left of $\Equiv$ is that of small
Cauchy-complete categories and open functors; and to the right of $\Equiv$ it is
presheaf categories and (co)complete, cartesian closed functors. In the second
instance the category to the left of $\Equiv$ is that of small Cauchy-complete
categories and open, retractionally surjective functors; and to the right of
$\Equiv$ it is presheaf categories and (co)complete, faithful, cartesian closed
functors.

\section{Modal Logic II}
  \label{section:modal-logic-2}

To make a two-dimensional Kripke semantics for modal logic we have to
categorify relations. We took the first step by considering bimodules, i.e.
information-order-respecting relations. The second step can be taken by
replacing $\TV$ with $\SET$; this leads us to the notion of a relation between
categories, also known as a \emph{profunctor} or \emph{distributor}
\cite{benabou_2000} \cite[\S 7]{borceux_1994a}.

\begin{definition}
  A \emph{profunctor} $R : \CC \pfto \DD$ is a functor $R : \Op{\CC} \times \DD
  \to \SET$.
\end{definition}


To formulate a two-dimensional Kripke semantics for modal logic we replace modal
Kripke frames with a small Cauchy-complete category $\CC$ with an
(endo)profunctor $R : \Op{\CC} \times \CC \to \SET$. To obtain the modalities we
can now play the same trick: putting $\lambda R : \Op{\CC} \to \FUNC{\CC}{\SET}$
into \eqref{diagram:left-kan-extension} we canonically obtain the following
diagram by Kan extension:
\begin{equation}
  \begin{tikzpicture}[node distance=2.5cm, on grid, baseline=(current  bounding  box.center)]
    \node (C) {$\Op{\CC}$};
    \node (PSHC) [right = 4cm of C] {$\FUNC{\CC}{\SET}$};
    \node (E) [below = 2cm of PSHC] {$\FUNC{\CC}{\SET}$};
    \path[->] (C) edge node [above] {$\Yo$} (PSHC);
    \path[->] (C) edge node [below = .5em] {$\lambda R$} (E);
    \path[->, dashed] (PSHC) 
      edge 
        node [left] {$\blacklozenge_R$} 
        node [right = .15em] {$\Adjoint$}
      (E);
    \path[->, bend right = 45, dotted] (E) edge node [right] {$\Box_R$} (PSHC);
  \end{tikzpicture}
\end{equation}
Conversely, any adjunction $\blacklozenge \Adjoint \Box$ on $\FUNC{\CC}{\SET}$
corresponds to the (endo)profunctor on $\CC$ given by $(c_1, c_2) \mapsto
\Hom[\CC]{c_1}{\blacklozenge c_2}$.

We may then define $\sem{\blacklozenge \varphi} \defeq \blacklozenge_R
\sem{\varphi} : \CC \fto \SET$ and $\sem{\Box \varphi} \defeq \Box_R
\sem{\varphi} : \CC \fto \SET$. It is worth unfolding what a proof of $\Box
\varphi$ at a world $w$ is to obtain an explicit description:
\begin{equation}
  \label{equation:box-2d}
  \sem{\Box \varphi}_w
  = \prn{\Box_R \sem{\varphi}}(w)
  = \Hom[\FUNC{\CC}{\SET}]{\lambda R(w)}{\sem{\varphi}}
  = \Hom[\FUNC{\CC}{\SET}]{R(w, -)}{\sem{\varphi}}
\end{equation}
Thus, a proof that $\varphi$ holds at $w$ is a natural transformation $\alpha :
R(w, -) \To \sem{\varphi}$. This has the expected shape of Kripke semantics for
$\Box$: for each $v \in \CC$ and proof $x \in R(w, v)$ that $v$ is accessible
from $w$ it gives us a proof $\alpha_v(x) \in \sem{\varphi}_v$ that $\varphi$
holds at $v$.

It is a little harder to see what a proof of $\blacklozenge \varphi$ at a world
$w$ is. It becomes more perspicuous if we use the coend formula for the left Kan
extension \cite[\S 2.3]{loregian_2021}:
\begin{equation}
  \label{equation:blacklozenge-2d}
  \sem{\blacklozenge \varphi}
  = \LKan{\lambda R} \sem{\varphi}
  \cong \int^{v \in \CC} \Hom[\FUNC{\CC}{\SET}]{\Yo{v}}{\sem{\varphi}} \times \lambda R(v)
  \cong \int^{v \in \CC} \sem{\varphi}_v \times R(v, -)
\end{equation}
Hence, a proof that $\blacklozenge \varphi$ holds at $w$ consists of a world $v
\in \CC$, a proof that $R(v, w)$, and a proof that $\varphi$ holds at
$v$---which is exactly what one would  expect. The difference is that the coend
quotients some of these pairs according to the action of $\CC$. See Mac
Lane and Moerdijk \cite[\S VII.2]{mac_lane_1994} for a textbook exposition on
why this is a tensor product of $\sem{\varphi}$ and $\lambda R$.

How well does this fit the categorical semantics of modal logic? As with
intuitionistic modal logic, there is also a number of proposals of what that
might be. A fairly recent idea is to define it as the semantics of a
\emph{Fitch-style calculus}, as studied by Clouston \cite{clouston_2018}. This
is exactly a bicartesian closed category $\CC$ equipped with an adjunction:
\begin{diagram}
  \label{diagram:fitch-style}
  \begin{tikzpicture}[node distance=2.5cm, on grid]
    \node (W) {$\CC$};
    \node (V) [right = 4cm of W] {$\CC$};
    \path[->, bend left=20] (W) edge node [above] {$\Box$} (V);
    \path[->, bend left=20] (V) edge node [below] {$\blacklozenge$} (W);
    \node (Adj1) [right = 2cm of W] {$\AdjointLDown$};
  \end{tikzpicture}
\end{diagram}
The left adjoint $\blacklozenge$ is often written as lock. It does not commonly
appear as a modality, but as an operator on contexts that corresponds to
`opening a box' in Fitch-style natural deduction \cite[\S 5.4]{huth_2004}. The
modality $\Box$ is a right adjoint, so that it automatically preserves all
limits, including products. This idea has proven remarkably robust: variations
on it have worked well for modal dependent type theories
\cite{birkedal_2020,gratzer_2020,gratzer_2021,gratzer_2022,gratzer_2023}. The
fact that an adjunction on a presheaf category corresponds precisely to a
two-dimensional Kripke semantics is further evidence that this is the correct
notion of categorical model of modal logic. 

Finally, note that \eqref{equation:box-2d} and \eqref{equation:blacklozenge-2d}
look suspiciously similar to the modal structure of Normalization-by-Evaluation
models for modal type theories. This is explicitly visible in the paper by
Valliappan et al. \cite[\S 2]{valliappan_2022}, and also implicitly present in
the paper by Gratzer \cite{gratzer_2022b}.

\subsection{Morphisms}

Define the category $\PROF$ to have as objects profunctors. A morphism $(f, g,
\alpha) : R \to S$ from $R : \CC \pfto \DD$ to $S : \CC' \pfto \DD'$ consists of
functors $f : \CC \fto \CC'$ and $g : \DD \fto \DD'$, and a natural
transformation $\alpha : R(-, -) \To S\prn{f(-), g(-)}$. The subcategory
$\EPROF$ consists of endoprofunctors $R : \CC \pfto \CC$, and triples of the
form $(f, f, \alpha)$. I will synecdochically refer to $\alpha : R(-,-) \To
S(f(-), f(-))$ as a morphism of $\EPROF$.  Thus, objects are two-dimensional
Kripke frames, and morphisms are functors that proof-relevantly preserve the
relation.

\begin{lemma}
  \label{lemma:endoprofunctors-nat}
  Morphisms of endoprofunctors $\alpha : R(-, -) \To S(f(-), f(-))$ are in
  bijection with natural transformations $\gamma: \Pre{f} \Box_{S} \To \Box_{R}
  \Pre{f}$.
\end{lemma}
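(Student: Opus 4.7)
The plan is to unfold both sides and pass through the Yoneda lemma: both the natural transformations $\gamma$ and the profunctor morphisms $\alpha$ will reduce to the same family of maps $R(c,c') \to S(f(c), f(c'))$. Recall from \eqref{equation:box-2d} that $\Box_R Q(c) = \Hom[\FUNC{\CC}{\SET}]{R(c,-)}{Q}$. A natural transformation $\gamma : \Pre{f}\Box_S \To \Box_R \Pre{f}$ is therefore a family
\[
  \gamma_{Q,c} : \Hom[\FUNC{\CC'}{\SET}]{S(f(c),-)}{Q} \to \Hom[\FUNC{\CC}{\SET}]{R(c,-)}{Q \circ f}
\]
natural in $Q \in \FUNC{\CC'}{\SET}$ and in $c \in \CC$.

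First I would fix $c$ and exploit naturality in $Q$. The domain is the representable functor $\Hom[\FUNC{\CC'}{\SET}]{S(f(c),-)}{-}$, so by the Yoneda lemma, natural transformations (in $Q$) from it into any functor $F : \FUNC{\CC'}{\SET} \to \SET$ are classified by elements of $F(S(f(c),-))$. Specialising to $F(Q) := \Hom[\FUNC{\CC}{\SET}]{R(c,-)}{\Pre{f} Q}$, such an element is precisely a morphism $\alpha_{c,-} : R(c,-) \to S(f(c), f(-))$ in $\FUNC{\CC}{\SET}$, i.e.\ by a further use of Yoneda, a family $\alpha_{c,c'} : R(c,c') \to S(f(c), f(c'))$ natural in $c'$. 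Concretely, $\alpha_{c,-}$ is recovered as $\gamma_{S(f(c),-),\, c}(\mathrm{id})$.

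Next I would show that the remaining naturality of $\gamma$ in $c$ matches exactly the (contravariant) naturality of $\alpha_{c,c'}$ in $c$. Given $h : c_1 \to c_2$ in $\CC$, I apply the naturality square of $\gamma$ to the map $S(f(h),-) : S(f(c_2),-) \to S(f(c_1),-)$ and evaluate on the identity of $S(f(c_2),-)$; a brief chase yields $\alpha_{c_1,c'} \circ R(h,c') = S(f(h), f(c')) \circ \alpha_{c_2,c'}$, which together with the $c'$-naturality already established is precisely an endoprofunctor morphism. The inverse construction simply sets $\gamma_{Q,c}(\beta) := \Pre{f}(\beta) \circ \alpha_{c,-}$, and mutual invertibility is immediate from the Yoneda form above.

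The main obstacle is purely bookkeeping: four hom-sets and three indices ($c$, $c'$, $Q$) are in play, and the matching of naturality squares must be tracked carefully. A slicker alternative, which I may prefer in the formal proof, is the \emph{mate correspondence} for the two adjunctions $\blacklozenge_R \Adjoint \Box_R$ and $\blacklozenge_S \Adjoint \Box_S$: this converts $\gamma$ into a natural transformation $\blacklozenge_R \Pre{f} \To \Pre{f} \blacklozenge_S$ between cocontinuous functors, whose restriction along $\Yo$ is, using $\blacklozenge_S \circ \Yo = \lambda S$ and the universal property of the left Kan extension, in bijection with natural transformations $\lambda R \To \Pre{f} \circ \lambda S \circ f^{\mathrm{op}}$---i.e.\ with endoprofunctor morphisms $\alpha$ by cartesian closure of $\CAT$.
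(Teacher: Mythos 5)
Your proof is correct and is essentially the paper's own argument: both unfold $\Box_S$ and $\Box_R$ as homs out of $S(f(-),-)$ and $R(-,-)$ respectively and then reduce the claim to the Yoneda lemma. The only real difference is one of routing: the paper uses the adjunction $\LKan{f} \Adjoint \Pre{f}$ to factor everything through the intermediate transformation $\LKan{f}R(-,-) \To S(f(-),-)$ (the $t_\alpha$ it reuses later when defining modal openness), whereas you apply Yoneda in element form directly and check naturality in $c$ by hand; your mate-correspondence alternative is also a valid variant of the same idea.
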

\begin{proof}
  Unfolding the definitions, $\gamma : \Hom{S(f(-), -)}{-} \To \Hom{R(-,
  -)}{\Pre{f}(-)}$. As $\LKan{f} \Adjoint \Pre{f}$  this is exactly a
  transformation $\Hom{S(f(-), -)}{-} \To \Hom{\LKan{f}R(-, -)}{-}$. By the
  Yoneda lemma, any such transformation arises by precomposition with a unique
  transformation $ \LKan{f}R(-,-) \To S(f(-), -)$. By $\LKan{f} \Adjoint
  \Pre{f}$ again, this uniquely corresponds to a transformation $\alpha : R(-,
  -) \To \Pre{f} S(f(-), -) = S(f(-), f(-))$.
\end{proof}
We thus obtain a duality
\begin{equation}
  \label{duality:eprof-pshcato}
  \EPROF^\text{op}_\text{cc} \Equiv \PSHCATO
\end{equation}
where $\PSHCATO$ is the category of presheaf categories $\FUNC{\CC}{\SET}$
equipped with a continuous $\Box : \FUNC{\CC}{\SET} \fto \FUNC{\CC}{\SET}$. Note
that, as presheaf categories are locally finitely presentable, $\Box$ always has
a left adjoint $\blacklozenge$. Thus, the objects are categorical models of
modal logic. Morphisms are pairs $(f, \gamma)$ of a (co)continuous $f : \CC \fto
\DD$ and a natural transformation $\gamma : \Pre{f} \Box \Rightarrow \Box
\Pre{f}$.

As before, open functors do not preserve truth; for that we need a notion of
modal openness. Let $\alpha : R(-, -) \To S(f(-), f(-))$. As pointed out in the
proof of \cref{lemma:endoprofunctors-nat} such an $\alpha$ uniquely corresponds
to a transformation $t_\alpha : \LKan{f} R(-, -) \To S(f(-), -)$. Its components 
\[
  t_{\alpha, c, v} : \int^{w \in V} R(c, w) \times \Hom[\DD]{f(w)}{v} \to S(f(c), v)
\]
map $x \in R(c, w)$ and $k : f(w) \to v$ to $S(\IdArr{f(c)}, k)(\alpha_{c,
v}(x))$. We can then say that

\begin{definition}
  \label{definition:modally-open-functor}
  $\alpha : R(-, -) \To S(f(-), f(-))$ is \emph{modally open} just if $t_\alpha$
  is an isomorphism.
\end{definition}
This asks that for every proof $y \in S(f(c), v)$ we should be able to find an
object $w \in \CC$, a proof $x \in R(c, w)$, and a morphism $k : f(w) \to v$, so
that $y = S(\IdArr{f(c)}, k)(\alpha_{c, v}(x))$. This is clearly a
categorification of \cref{definition:modally-open}, and leads to the following
lemma:

\begin{lemma}
  $\alpha$ is modally open iff the corresponding $\Pre{f} \Box_S \To \Box_R
  \Pre{f}$ is an isomorphism.
\end{lemma}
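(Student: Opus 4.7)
The plan is to reuse the chain of bijections established in the proof of \cref{lemma:endoprofunctors-nat}, and verify that each step of that chain preserves and reflects isomorphisms. By \cref{definition:modally-open-functor}, $\alpha$ is modally open exactly when $t_\alpha : \LKan{f}R(-, -) \To S(f(-), -)$ is a natural isomorphism, so the task reduces to showing that $t_\alpha$ is an iso iff the corresponding $\gamma : \Pre{f}\Box_S \To \Box_R \Pre{f}$ is an iso.

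Unfolding the earlier proof, the passage $t_\alpha \leftrightarrow \gamma$ factors through an intermediate natural transformation
\[
  \tau : \Hom{S(f(-), -)}{-} \To \Hom{\LKan{f}R(-, -)}{-}
\]
obtained by precomposition with $t_\alpha$. Whiskering $\tau$ with the natural isomorphism $\Hom{\LKan{f}R(c, -)}{Q} \cong \Hom{R(c, -)}{\Pre{f}Q}$ supplied by $\LKan{f} \Adjoint \Pre{f}$ then gives exactly $\gamma$, since $\Pre{f}\Box_S(Q)(c) = \Hom{S(f(c), -)}{Q}$ and $\Box_R\Pre{f}(Q)(c) = \Hom{R(c, -)}{\Pre{f}Q}$.

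First I would check that $t_\alpha$ is iso iff $\tau$ is iso. This is a direct consequence of the Yoneda lemma: the embedding $A \mapsto \Hom(A, -)$ of $\FUNC{\CC}{\SET}$ into the functor category $\FUNC{\FUNC{\CC}{\SET}}{\SET}$ is fully faithful, so the precomposition-by-$t_\alpha$ transformation $\tau$ is a natural isomorphism iff $t_\alpha$ is. Then I would observe that $\tau$ is iso iff $\gamma$ is iso, because $\gamma$ is obtained from $\tau$ by whiskering with a natural isomorphism, and this operation evidently preserves and reflects being an isomorphism (its inverse is whiskering with the inverse iso). Chaining these two equivalences yields the required biconditional.

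I do not expect any real obstacle here; the only subtlety is bookkeeping. One must keep track of variances and of the direction of each bijection so that the Yoneda step and the two adjunction transpositions line up correctly with the explicit formulas for $\Box_R$, $\Pre{f}\Box_S$, and the component $t_{\alpha, c, v}$ given just before \cref{definition:modally-open-functor}. But once the diagram is drawn, the proof is essentially free: each arrow in the bijection of \cref{lemma:endoprofunctors-nat} is assembled from fully faithful embeddings or natural isomorphisms, both of which reflect as well as preserve isomorphisms.
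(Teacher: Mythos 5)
Your proof is correct and follows essentially the same route as the paper's: the paper likewise observes that $\gamma$ is obtained from $t_\alpha$ by Yoneda precomposition and adjunction transposition, and concludes that $\gamma$ is an isomorphism iff $t_\alpha$ is. You merely make explicit the two reflection steps (full faithfulness of the Yoneda embedding and whiskering by the adjunction isomorphism) that the paper leaves implicit.
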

\begin{proof}
  The proof of \cref{lemma:endoprofunctors-nat} precomposes with $t_\alpha$ to
  get $\gamma$. Thus $\gamma$ is iso iff $t_\alpha$ is.
\end{proof}

Thus, the duality \eqref{duality:eprof-pshcato} may be restricted to dualities
between the wide subcategories
\begin{align}
  \label{equation:eprofopen-pshcato}
  \EPROF^\text{op}_\text{cc, moo} \Equiv \PSHCATO_{\Rightarrow o}
  & \quad &
  \EPROF^\text{op}_\text{cc, moo, rs} \Equiv \PSHCATO_{\Rightarrow of}
\end{align}
The morphisms to the left of $\Equiv$ are modally open, open maps (resp. and
retractionally surjective); and the morphisms to the right of $\Equiv$ are $(f,
\gamma)$ where $f$ is cartesian closed (resp. and faithful) and $\gamma :
\Pre{f} \Box \cong \Box \Pre{f}$ is a natural isomorphism.

\section{Other related work}

Perhaps the work most closely related to this paper is that on
\emph{Kripke-style lambda models} by Mitchell and Moggi \cite{mitchell_1991}.
These amount to elaborating the first-order definitions of applicative structure
and $\lambda$-model in the internal language of a presheaf category, with the
base category being a partial order. In practice this means that the
interpretation of function types is only a subfunctor of the exponential of
presheaves \cite[\S 8]{mitchell_1991}. However, Mitchell and Moggi prove that
these models are sound and complete for the $(\times\rightarrow)$ fragment, even
in the presence of empty types. They also use some general theorems about open
geometric morphisms to prove that any cartesian closed category can be presented
as such a model.

Another piece of work that bears kinship with the present one is Hermida's
fibrational account of relational modalities \cite{hermida_2011}. Hermida shows
that both the relational modalities $\lozenge$ and $\Box$ can be obtained
canonically as extensions of predicate logic to relations, with the modalities
arising as compositions of adjoints. The black diamond $\blacklozenge$ makes a
brief cameo as the induced left adjoint to $\Box$, as does the dual black box
\cite[\S 3.3]{hermida_2011}. While the decompositions obtained by Hermida seem
more refined than the results here, Kan extension does not make an explicit
appearance. As such, the relationship to the present work is yet to be
determined.

Awodey and Rabe \cite{awodey_2011} give a Kripke semantics for extensional
Martin-L\"{o}f type theory (MLTT), in which contexts are posets and types are
presheaves over them. They use topos-theoretic machinery to prove that every
locally cartesian closed category can be embedded in a presheaf category over a
poset; this result seems similar to one of Mitchell and Moggi, but the proof
appears entirely different. As a consequence, they show that presheaf categories
over posets form a complete class of models for extensional MLTT, in fact a
subclass of locally cartesian closed categories.

Alechina et al. \cite{alechina_2001} present dualities between Kripke and algebraic semantics
for constructive \textsf{S4} and propositional lax logic. Their interpretation
of $\Box$ follows that of Plotkin, Stirling and Simpson
\cite{plotkin_1986,simpson_1994}.

Ghilardi and Meloni \cite{ghilardi_1988} explore a presheaf-like interpretation
of (predicate) modal logic, which is similar to ours, albeit non-proof-relevant.
They work over the identity profunctor $\Hom{-}{-}$. They are hence forced to
weaken the definition of presheaf. See also \cite{ghilardi_1989,ghilardi_1991}.

Awodey, Kishida and Kotzsch \cite{awodey_2014} give a topos-theoretic semantics
for a higher-order version of intuitionistic \textsf{S4} modal logic. They also
briefly survey much previous work on presheaf-based and topos-theoretic
semantics for first-order modal logic. Their work is not proof-relevant.

Finally, there is clear methodological similarity between the results obtained
here and the results of Winskel and collaborators on open maps and bisimulation
\cite{joyal_1996,cattani_2005}. One central difference is that Winskel et al.
are mainly concerned with open maps between presheaves themselves, whereas I
only consider open maps between (two-dimensional) frames.


\section*{Acknowledgements}

I have benefitted significantly from conversations with Dan Licata, Nachiappan
Valliappan, Fabian Ruch, Amar Hadzihasanovic, Kohei Kishida, Sean Moss, Sam
Staton, Daniel Gratzer, Lars Birkedal, Jonathan Sterling, Philip Saville, and
Gordon Plotkin.

This work was supported by the UKRI Engineering and Physical Sciences
Research Council grants EP/Y000242/1, EP/Y033418/1, the UKRI International
Science Partnerships Fund (ISPF), and a Royal Society Research Grant.

\bibliography{2dks}

\end{document}